\pgfplotsset{compat=1.16}
\definecolor{pcn}{rgb}{0.19, 0.84, 0.78} 
\definecolor{fes}{rgb}{0.52, 0.8, 0.0} 
\definecolor{hybrid-proj}{rgb}{0.03, 0.15, 0.4}   
\definecolor{hybrid}{rgb}{0.25, 0.41, 0.88}  
\setlist[enumerate]{leftmargin=.5in}
\setlist[itemize]{leftmargin=.5in}
\newcommand{\hybrid}{SAFES }
\newcommand{\hybridname}{Subspace Adapting Functional Ensemble Sampler}
\newcommand{\hybridp}{SAFES-P }
\newcommand{\hybridpname}{Subspace Adapting Functional Ensemble Sampler--Projected}
\crefname{hypothesis}{Hypothesis}{Hypotheses}
\newcommand{\be}{\begin{equation}}
\newcommand{\ee}{\end{equation}}
\newcommand{\beqas}{\begin{eqnarray*}}
\newcommand{\eeqas}{\end{eqnarray*}}
\newcommand{\dee}{\mathrm{d}}
\newcommand{\iid}{\overset{\mathrm{iid}}{\sim}}
\newcommand{\cG}{\mathcal{G}}
\def \N {{\mathbb N}}
\def \R {{\mathbb R}}
\def \P {{\mathbb P}}
\def \tr {{\mathrm{tr}}}
\newcommand{\G}{\mathcal{G}}
\numberwithin{theorem}{section}
\newcommand{\TheTitle}{A gradient-free subspace-adjusting ensemble sampler for\\ infinite-dimensional Bayesian inverse problems}
\newcommand{\TheAuthors}{Matthew M.\ Dunlop and Georg Stadler}
\title{{\TheTitle}}
\author{
  Matthew M. Dunlop\thanks{Courant Institute of Mathematical Sciences, New York University, New York, New York, 10012, USA (\email{matt.dunlop@nyu.edu, stadler@cims.nyu.edu})}\and Georg Stadler\footnotemark[1].
}
\begin{document}

\maketitle

\begin{abstract}
Sampling of sharp posteriors in high dimensions is a challenging problem, especially when gradients of the likelihood are unavailable. In low to moderate dimensions, affine-invariant  methods, a class of ensemble-based gradient-free methods, have found success in sampling concentrated posteriors. However, the number of ensemble members must exceed the dimension of the unknown state in order for the correct distribution to be targeted. Conversely, the preconditioned Crank-Nicolson (pCN) algorithm succeeds at sampling in high dimensions, but samples become highly correlated when the posterior differs significantly from the prior. In this article we combine the above methods in two different ways as an attempt to find a compromise. The first method involves inflating the proposal covariance in pCN with that of the current ensemble, whilst the second performs approximately affine-invariant steps on a continually adapting low-dimensional subspace, while using pCN on its orthogonal complement.

\end{abstract}

\begin{keywords}
Markov chain Monte Carlo, ensemble sampling, Bayesian inference, dimension-robust, affine invariance, gradient-free
\end{keywords}

\begin{AMS}
	65N21, 
	62F15, 
	65C05,  
	65N75, 
	90C56   
\end{AMS}

\section{Introduction}
Over the last decade, solving Bayesian inverse problems with high-dimensional parameters has become increasingly feasible due to growing computational resources and the development of methods that scale effectively with the dimension of the state space. To characterize the solution of a Bayesian inference problem, which is the posterior distribution of the parameters, one typically relies on Markov chain Monte Carlo (MCMC) sampling methods.

Abstractly, one is interested in sampling a probability measure $\mu$ that is absolutely continuous with respect to a simpler measure $\mu_0$,
\[
\mu(\dee u) \propto \exp(-\Phi(u))\,\mu_0(\dee u),
\]
where $\Phi(\cdot)$ is a the negative log likelihood, a typically costly-to-evaluate function. Ensemble sampling methods use a set of particles to estimate properties of $\mu$ that can inform the proposal step in MCMC. A particular class of ensemble methods are affine-invariant ensemble methods, whose behavior is invariant under affine transformations.
The nature of affine-invariant sampling methods means that the number of particles required must exceed the dimension of the state $u$, or else they will only sample the distribution restricted to the span of the initial ensemble. \Cref{fig:3d_example} illustrates this effect for a toy example of sampling a standard normal distribution in three dimensions using three ensemble members: the initial ensemble state defines a plane, and the MCMC chains are unable to leave this plane. Additionally, even if a sufficient number of particles are used, the correct distribution may not be targeted; the paper \cite{huijser2015properties} investigates this in the case that the target distribution is a high-dimensional Gaussian. Generally, the performance of ensemble samplers is known to degrade in higher dimensions.

Dimension-robust sampling methods are methods whose performance does not degrade for increasing dimension. However, they are are known to converge slowly when $\mu$ differs substantially from the reference measure $\mu_0$. The aim of this paper is to develop a hybrid version of these two opposite-end sampling approaches, i.e., benefit from the convergence properties of ensemble samplers for concentrated distributions $\mu$ while avoiding degeneration of sampling performance in high dimensions.

\begin{figure}
    \centering
    \includegraphics[width=\textwidth]{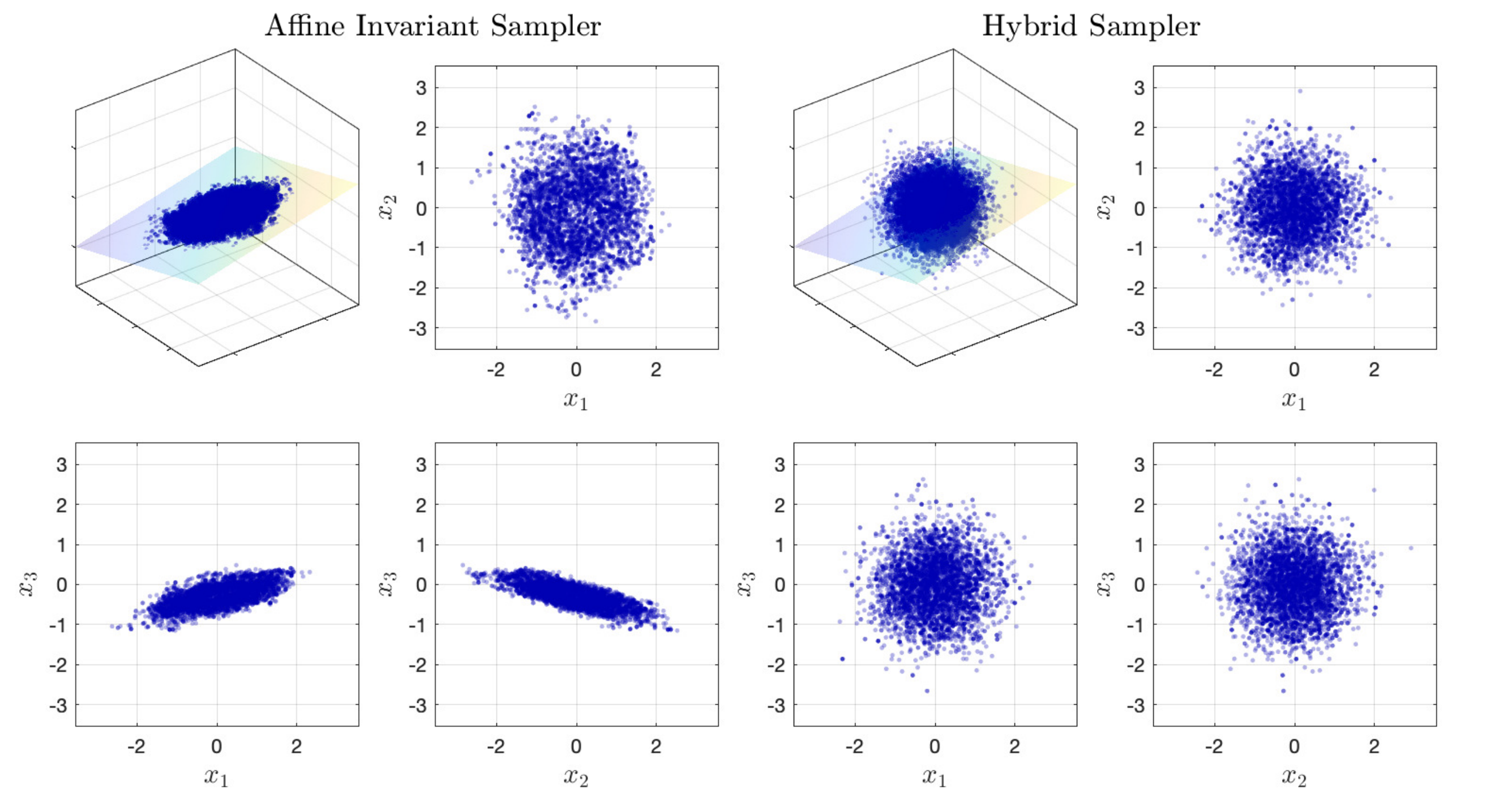}\vspace{-3ex}
    \caption{An example of the affine-invariant and hybrid samplers for the distribution $N(0,I)$ in $\R^3$, with 5000 samples shown for each sampler and an ensemble consisting of three particles. The planes indicate the span of the initial ensemble of particles; the affine-invariant sampler cannot leave this plane while the particles of the hybrid samplers are not restricted to the plane.}
    \label{fig:3d_example}
\end{figure}

\subsection{Related work}
Sampling of high-dimensional posterior distributions via MCMC has received much attention in the past decade, utilizing the formulation of the Metropolis-Hastings algorithm on general state spaces introduced in \cite{tierney1994markov}. The preconditioned Crank-Nicolson (pCN) method  \cite{cotter2013mcmc} is a simple gradient-free sampler, in the case of Gaussian priors, with the property that its convergence rate is bounded independently of the dimension of the state space \cite{hairer2014spectral}. Modifications of pCN are considered in \cite{pinski2015algorithms, rudolf2018generalization} wherein proposals may be more adapted to the posterior distribution, typically using derivative information of the likelihood. The paper \cite{vollmer2015dimension} provides a framework for constructing analogous samplers when the prior is non-Gaussian. Other samplers with dimension-independent convergence properties, utilizing derivative information, include $\infty$-MALA, $\infty$-HMC and their manifold variants \cite{BeskosGirolamiLanEtAl17} and DILI \cite{CuiLawMarzouk16}. The papers \cite{KimVillaParnoEtAl21, BeskosGirolamiLanEtAl17} provide a systematic comparison of a number of the above algorithms applied to high-dimensional Bayesian inverse problems. Outside of MCMC, \cite{agapiou2017importance} considers the performance of importance sampling on general state spaces, and its dependence on the discretization dimension and effective dimension of the problem. Variations of the ensemble Kalman Filter (EnKF) \cite{evensen2003ensemble} have also been considered in the context of Bayesian inversion on general state spaces \cite{garbuno2020interacting,pavliotis2022derivative}, allowing for derivative-free exploration of the posterior via approximate natural Langevin dynamics \cite{garbuno2020interacting}.

When the dimension of the state space is finite and relatively low, affine-invariant ensemble samplers (AIES) \cite{GoodmanWeare10,foreman2013emcee} can allow for efficient derivative-free exploration of complicated posterior distributions; in particular those that are highly concentrated due to particularly informative observations. The continuous time limit of one such algorithm has been studied \cite{Garbuno-InigoNuskenReich20}, resulting in certain Langevin dynamics. The dependence of affine-invariant samplers on dimension has also been studied \cite{huijser2015properties}, showing certain degeneration as the dimension increases. To help overcome the dimensional limitations of affine-invariant samplers, \cite{coullon2021ensemble} introduces a sampler that uses AIES on a subspace defined using the prior distribution, and pCN on its complement. In this article we take a similar approach wherein the subspace is not as strongly constrained by the prior, which can be more appropriate in the setting of concentrated posteriors.

\subsection{Contributions and limitations}
Our main contributions are as follows:
(1) We propose two gradient-free ensemble sampling algorithms that are well-defined in infinite dimensions. In these methods, the  
subspace spanned by the ensemble is not fixed, and thus there is no minimally required ensemble size.
(2) We numerically study the new methods' performance for different ensemble size, and compare their performance to existing methods for linear, nonlinear and non-smooth infinite-dimensional Bayesian inverse problems.

The proposed methods also have limitations: 
(1) Since the ensemble is used to compute a Gaussian proposal distribution in the subspace spanned by the particles, the method loses efficiency for strongly non-Gaussian densities.
(2) Our algorithms require some parameter choices, e.g., the ensemble size, a jump parameter in MCMC, and the dimension of a subspace in one of the methods. However, we will show numerically that the algorithms' performance is rather insensitive to these choices.

\section{Bayesian inverse problems}
In this section we provide an overview of the Bayesian approach to inverse problems, combining the observation model with the prior measure to construct the posterior measure on general state spaces. We then discuss the problem of producing samples from the posterior numerically and various issues that may arise.

\subsection{The prior, likelihood and posterior}
Suppose that we have data $y \in Y$ arising from some nonlinear noisy observations of a state $u \in X$, and our goal is to estimate $u$ from $y$. We write 
\[
y = F(\G(u),\eta)
\]
for some forward map $\G:X\to Y_1$, random noise $\eta \in Y_2$ and state-to-observation map $F:Y_1\times Y_2\to Y$. A common setup is that of additive Gaussian noise: $Y_1 = Y_2 = Y$, $F(z,\eta) = z+\eta$ and $\eta \sim N(0,\Gamma)$, so that
\[
y = \G(u) + \eta,\quad \eta\sim N(0,\Gamma).
\]
Such problems are typically ill-posed from a classical perspective: there may exist no solution, the solution may not be unique, or the solution may be highly sensitive to the realization of the noise $\eta$. In this article, we consider the underdetermined case wherein that $X$ is high- or infinite-dimensional Hilbert space and $Y = \R^J$ is finite-dimensional.

We consider the Bayesian approach to the inversion wherein rather than a single state $u \in X$ as a solution, we seek a probability distribution $\mu$ on $X$. If we quantify our prior beliefs about unknown $u$ by a measure $\mu_0(\dee u) = \P(\dee u)$ on $X$ and provide a probability distribution for the noise $\eta$, this induces a likelihood function $\P(y|u)$. For example in case of additive Gaussian noise above, the likelihood is given by
\[
\P(y|u) \propto \exp\left(-\frac{1}{2}\|\cG(u)-y\|_\Gamma^2\right),
\]
where $\|\cdot\|_\Gamma = \|\Gamma^{-\frac{1}{2}}\cdot\|$. We define the solution to the Bayesian inverse problem as the measure $\mu(\dee u) = \P(\dee u|y)$, where by Bayes' theorem
\[
\P(\dee u|y) = \frac{\P(y|u)\P(\dee u)}{\P(y)}.
\]
To be more explicit, suppose that the likelihood takes the form
\[
\P(y|u) \propto \exp(-\Phi(u;y)),
\]
where $\Phi$, referred to as the negative log-likelihood, is sufficiently regular \cite{DashtiStuart16}. Then the Bayesian posterior $\mu$ is absolutely continuous with respect to the prior $\mu_0$, and its Radon-Nikodym derivative takes the form\footnote{In what follows we drop the dependence of $\Phi$ on the data $y$ as we assume it fixed.}
\begin{align}
\label{eq:bayes}
\frac{\dee \mu}{\dee \mu_0}(u) = \frac{1}{Z}\exp(-\Phi(u)),\quad Z = \int_X \exp(-\Phi(u))\,\mu_0(\dee u).
\end{align}
Though our motivation is Bayesian inversion, the methodology introduced in this article may be used to sample general measures with the form \cref{eq:bayes}, for example Gibbs measures. We will still however refer to $\Phi$ as the negative log-likelihood and $\mu_0$ as the prior.

In this article, we focus on the case where the prior measure $\mu_0 = N(0,C_0)$ is a centered Gaussian. In this case, when the dimension of $X$ is finite,  the posterior admits a Lebesgue density $\mu(\dee u) = \pi(u)\,\dee u$,
\[
\pi(u) \propto \exp\left(-\Phi(u) - \frac{1}{2}\|u\|_{C_0}^2\right),
\]
which will be useful to consider to provide a formal interpretation of the infinite-dimensional algorithms introduced in the following section.

\begin{remark}
\label{rem:noncenter}
The assumption that the prior measure $\mu_0$ is a centred Gaussian is not as strong as it first appears. For example, suppose that the prior takes the form
\[
\mu_0(\dee u) =  \frac{1}{Z_0}\left(-\Phi_0(u)\right)\,\nu_0(\dee u),\quad Z_0 = \int_X \exp(-\Phi_0(u))\,\nu_0(\dee u),
\]
where $\nu_0$ is the pushforward of a Gaussian measure $\nu_r$ via a possibly non-linear map $T:\Xi\to X$, i.e., $\nu_0 = T^\sharp \nu_r$. Then $\mu = T^\sharp \nu$, where
\[
\frac{\dee \nu}{\dee \nu_r}(\xi) = \frac{1}{Z_r}\exp\left(-\Psi(\xi)\right),\quad Z_r = \int_\Xi \exp(-\Psi(\xi))\,\nu_r(\dee \xi)
\] 
and $\Psi(\xi) = \Phi(T(\xi)) + \Phi_0(T(\xi))$. The measure $\nu$ is precisely of the form \cref{eq:bayes} with a Gaussian dominating measure and so if we can sample $\nu$, we can sample $\mu$ by transforming the samples with $T$. The simple case $T(\xi) = \xi+m_0$ and $\Phi_0 = 0$ illustrates why it is sufficient to assume the dominating Gaussian is centred, for example. In the algorithms we consider, it can be useful to work with a white noise measure $\nu_r = N(0,I)$, $\Phi_0 = 0$ and define $T(\xi) = C_0^{{1}/{2}}\xi$ to map to the prior Gaussian measure $\mu_0 = \nu_0$.
\end{remark}

\subsection{Probing the posterior}

Though the solution $\mu$ exists abstractly as a measure under relatively mild assumptions on the prior and negative log-likelihood, one is often interested in getting information from this measure numerically. For example, one may desire the mode, the mean or estimates and confidence bounds on quantities of interest. The latter typically require samples from the posterior to estimate via Monte Carlo, since the integrals involved are often high-dimensional. Producing these samples can be challenging in many setups, for example,
\begin{enumerate}[(i)]
\item when the data is particularly informative, the posterior distribution can be concentrated on a lower dimensional submanifold of $X$ which needs to be discovered;
\item effective sampling methods often make use of derivatives of the posterior density, but these may not exist, may be unknown, or may be computationally prohibitive to evaluate;
\item the posterior may have multiple distinct modes, which many sampling methods may struggle to explore; and
\item when the dimension of the space $X$ is infinite, the posterior cannot admit a Lebesgue density $\pi$, however many sampling algorithms are defined in terms of a Lebesgue density. Similarly in high but finite dimensions, the posterior is often almost singular with respect to the Lebesgue measure, leading to statistical issues with said algorithms.
\end{enumerate}

In the remainder of this article we consider certain Markov chain Monte Carlo (MCMC) methods with the aim of partly resolving the above points.

\section{MCMC sampling for Bayesian inverse problems}
\label{sec:mcmc}
In this section we first give an overview of Metropolis-Hastings MCMC algorithms, outline two classes of such algorithms (affine-invariant and dimension-robust) and describe some of their respective advantages and disadvantages. We then introduce two hybrid methods that interpolate between the two classes as an approach to ameliorating some of their disadvantages.

\subsection{Metropolis-Hastings MCMC sampling}
\label{ssec:metropolis}
MCMC methods aim to sample a given probability distribution by constructing a Markov chain for which it is the stationary distribution. A common construction of such a chain is via a Metropolis-Hastings propose-accept-reject mechanism. Given a target probability distribution $\pi$ and a state $u_k$, a new state $\hat{u}_{k}$ is proposed according to a proposal distribution $\hat{u}_{k} \sim q(u_k,u)\,\dee u$. One then sets $u_{k+1} = \hat{u}_{k}$ with probability
\[
\min\left\{1,\frac{\pi(\hat{u}_k)q(u_k,\hat{u}_k))}{\pi(u_k)q(\hat{u}_k,u_k))}\right\},
\]
or else sets $u_{k+1} = u_k$. This choice of acceptance probability ensures that the resulting Markov chain satisfies detailed balance and hence has the desired stationary distribution. A simple choice of proposal distribution is a symmetric random walk proposal, $q(u_k,\cdot) = N(u_k,C)$ for some jump covariance $C$, in which case the algorithm is referred to as Random Walk Metropolis (RWM). However, depending on the structure of the target measure $\pi$, a more complex proposal distribution is typically more efficient computationally. Once the Markov chain has reached stationarity (after a period referred to as burn-in), the samples $\{u_k\}$ may be used to approximate quantities of interest, such as mean, variance, or marginal probability distributions. The samples $\{u_k\}$ are typically correlated -- ideally one wishes to produce a Markov chain whose samples are as least correlated as possible in order to estimate these quantities of interest efficiently.

\subsection{Affine-invariant MCMC sampling}
\label{ssec:affine}

Instead of targeting the posterior density $\pi$ on $X$ directly, an ensemble of particles is used to target the product measure
\begin{align}
\label{eq:pistar}
\pi^*(\dee u^{(1)},\ldots,\dee u^{(N)}) = \prod_{j=1}^N \pi(\dee u^{(j)})
\end{align}
on $X^N$. This immediately provides two advantages over a single chain targeting $\pi$:
\begin{enumerate}[(i)]
\item if the posterior is multimodal, different particles can explore distinct modes without the need to move between them; and 
\item the empirical distribution of the ensemble at a given step provides a coarse estimate for the posterior distribution, which can be used to adapt the proposal distribution.
\end{enumerate}
A class of ensemble methods, called affine-invariant methods, were introduced in \cite{GoodmanWeare10}. Suppose that the MCMC update for a particular particle takes the form
\[
u^{(j+1)} = R(u^{(j)},\xi^{(n)}),
\]
where $\xi^{(n)}$ is a random variable. We say that the update is affine-invariant if for any $A \in \mathcal{L}(X,X)$ and $b \in X$,
\[
R(Au+b,\xi) = AR(u,\xi) + b.
\]
If a method has this property then as a consequence, distributions which are concentrated around a hyperplane are as easy to sample as those which are more dispersed; see \cite{GoodmanWeare10} for more details.
We provide an overview of an example of an affine-invariant method introduced in \cite{GoodmanWeare10}, referred to as the \emph{walk move}. Given a particle $u^{(j)}$ we denote $u^{(-j)}$ the complementary ensemble
\[
u^{(-j)} = \{u^{(1)},\ldots,u^{(j-1)},u^{(j+1)},\ldots,u^{(N)}\}
\]
Then given a subcollection $S\subseteq u^{(-j)}$, after the chain has reached stationarity, the sample covariance of $S$ should provide an approximation to posterior covariance. Thus, one can perform RWM updates where the proposal covariance is proportional to this sample covariance. The algorithm is given explicitly in \cref{alg:affine}, and referred to as the Affine Invariant Ensemble Sampler (AIES).

\begin{algorithm}
\begin{algorithmic}
\caption{Affine-invariant MCMC sampling}
\label{alg:affine}
\State Choose initial ensemble of particles $\{u_1^{(n)}\}_{n=1}^N \subseteq X$ and jump parameter $\lambda > 0$.
\For{$k=1:K$}
\For{$n=1:N$}
\State Choose $S \subseteq u_k^{(-n)}$ and propose
\[
\hat{u}_k^{(n)} = u_k^{(n)} + \lambda\cdot\frac{1}{\sqrt{|S|}}\sum_{u_k^{(j)} \in S} z_j(u_k^{(j)} - \overline{u}_S),\quad z_j \iid N(0,1).
\]
\State Set $u_{k+1}^{(n)} = \hat{u}_k^{(n)}$ with probability
\[
\min\left\{1,\exp\left(\Phi(u_k^{(n)}) - \Phi(\hat{u}_k^{(n)}) + \frac{1}{2}\|u_k^{(n)}\|_{C_0}^2 - \frac{1}{2}\|\hat{u}_k^{(n)}\|_{C_0}^2\right)\right\}
\]
\State or else set $u_{k+1}^{(n)} = u_k^{(n)}$.
\EndFor
\EndFor
\State\Return $\{u_k^{(n)}\}_{k,n=1}^{K,N}$.
\end{algorithmic}
\end{algorithm}

Other affine invariant proposals are available, such as the stretch move \cite{GoodmanWeare10}. However, these proposals often have a strong dimensional dependence, and in particular are not well-defined in infinite dimensions. Recently the ALDI method has been introduced \cite{Garbuno-InigoNuskenReich20}, which involves simulating an appropriate affine-invariant Langevin diffusion targeting $\pi^*$ with an Euler-Maruyama scheme; this is a modification of the Ensemble Kalman Sampler \cite{garbuno2020interacting} such that the correct distribution is targeted.
A drawback of the above methods is that in order for them to sample the correct distribution, the number of particles $N$ must be larger than the dimension of the state space $X$. For example, when using \cref{alg:affine} the particles cannot move out of the lowest dimension hyperplane passing through the initial ensemble; see \cref{fig:3d_example} for a simple illustration in three dimensions. When the dimension of the state space is high or infinite this requirement can make the algorithm impractical.

\subsection{Dimension-robust MCMC sampling}
\label{ssec:robustmcmc}
As we are interested in the case when $X$ is high- or infinite-dimensional, we ideally desire a sampling method that is well-defined in infinite dimensions to bypass dimension-dependent issues. Such methods have received much attention recently, though the general Metropolis-Hastings algorithm was formulated on Hilbert space in 1994 \cite{tierney1994markov}. Key to the construction of these algorithms is that the posterior is absolutely continuous with respect to a dominating measure -- in our setup we assume this to be Gaussian, rather than the Lebesgue measure as is typically the case in finite dimensions. The notion of dimension-robustness informally refers to the algorithm being well-defined and ergodic on Hilbert space, and more rigorously defined as the geometric rate of convergence to stationarity with respect to some metric on measures being bounded below by some positive constant independently of dimension.

An example of a dimension-robust MCMC method, assuming a Gaussian prior, is the preconditioned Crank-Nicolson (pCN) method \cite{cotter2013mcmc}. This is a modification of the random walk Metropolis algorithm such that for the proposal, the current state is rescaled and then perturbed by a Gaussian random variable with covariance proportional to the prior covariance:
\[
\hat{u}_k = \sqrt{1-\beta^2}u_k + \beta\xi,\quad \xi\sim N(0,C_0),
\]
for some $\beta \in (0,1]$. The acceptance probability is then simply a likelihood ratio -- the prior information is fully contained in the proposal. When a more general prior mean is $m_0 \in X$ is assumed, the proposal
\[
\hat{u}_k = m_0 + \sqrt{1-\beta^2}(u_k-m_0) + \beta\xi,\quad \xi\sim N(0,C_0),
\]
is instead used, with the same acceptance probability. Whilst this algorithm works in arbitrarily high dimensions when the posterior is absolutely continuous with respect to the prior, its performance in terms of mixing can be poor when the posterior is far from the prior, i.e., when the data is particularly informative and the likelihood is very skewed: in order to maintain a reasonable acceptance rate, the parameter $\beta$ must be chosen extremely small, and so samples are highly correlated. One approach is to, instead of using jumps based on the prior, use jumps from some other Gaussian distribution that has been informed by the likelihood. As long as the jump distribution is equivalent to the prior, the modification to the acceptance probability to ensure detailed balance holds is well-defined in infinite dimensions. Specifically, suppose that the jump distribution is taken to be $N(m,C)$, then the proposal distribution is given by
\begin{align*}
Q(u,dv) &= N\left(m + \sqrt{1-\beta^2}(u-m),\beta^2 C\right).
\end{align*}
Defining the measures $\omega$, $\omega^\top$ on the product space $X\times X$ by
\[
\omega(\dee u,\dee v) = \mu(\dee u)Q(u,\dee v),\quad \omega^\top(\dee u,\dee v) = \omega(\dee v,\dee u),
\] 
following \cite{tierney1994markov} the acceptance probability is then given by
\[
\alpha(u,\hat{u}) = \min\left\{1,\frac{\dee \omega}{\dee \omega^\top}(\hat{u},u)\right\},
\]
which is well-defined by the absolute continuity $\mu\ll\mu_0$ and assumed equivalence of the prior and jump distributions. The algorithm, referred to as generalized pCN (gpCN), is given in \cref{alg:gpcn} after calculating this Radon-Nikodym derivative\footnote{A related algorithm, introduced in \cite{rudolf2018generalization}, is also referred to as gpCN. This algorithm also modifies the jump distribution of standard pCN away from the prior, except its mean is modified in such a way that the expression for the acceptance probability remains the same.}. To the authors' knowledge, this algorithm first appeared in \cite{pinski2015algorithms}. If derivatives of the likelihood are available, a typical example of jump distribution one can use is the Laplace approximation to the posterior or some approximation thereof \cite{bui2013computational,pinski2015algorithms}. In the following subsection we consider a different class of jump distributions which do not necessarily require derivatives. The case $m=0$ and $C = C_0$ provides the original pCN algorithm, in which case $I_C \equiv 0$. The parameters $\beta$, $m$ and $C$ may be chosen to depend on the time step $k$, for example if one were to use an adaptive variant of the above; however conditions on the dependence on $k$ are required in order to preserve ergodicity.

\begin{algorithm}
\begin{algorithmic}
\caption{Generalized pCN MCMC sampling}
\label{alg:gpcn}
\State Choose initial state $u_1 \in X$.
\For{$k=1:K$}
\State Propose
\[
\hat{u}_k = m + \sqrt{1-\beta^2}(u_k-m) + \beta \xi,\quad \xi \sim N(0,C)
\]
\State Set $u_{k+1} = \hat{u}_k$ with probability
\[
\min\left\{1,\exp\left(\Phi(u_k) - \Phi(\hat{u}_k) + I_C(u_k) - I_C(\hat{u}_k)\right)\right\}
\]
\State where
\begin{align}
\label{eq:IC}
I_C(u) = \frac{1}{2}\Big\langle u,\big(I-C_0^{\frac{1}{2}}C^{-1}C_0^{\frac{1}{2}}\big)u\Big\rangle_{C_0} - \langle u,m\rangle_C + \frac{1}{2}\|m\|_C^2,
\end{align}
\State or else set $u_{k+1} = u_k$.
\EndFor
\State\Return $\{u_k\}_{k=1}^K$.
\end{algorithmic}
\end{algorithm}

Note that in finite dimensions, $I_C(u)$ can be defined more directly as 
\[
I_C(u) = \frac{1}{2}\|u\|_{C_0}^2 - \frac{1}{2}\|u-m\|_C^2.
\]
In infinite dimensions, however, each of these terms is infinite almost surely. In the definition of $I_C$ in \cref{eq:IC} each term is finite, see \cref{appendix}.

\subsection{A hybrid algorithm}
\label{ssec:hybrid}
The affine-invariant MCMC has the advantage of adapting well to skewed posterior distributions. However, it targets the incorrect distribution if there are fewer particles than dimensions, making it impractical for high-dimensional problems. Conversely the gpCN algorithm is well-defined in infinite dimensions, but performs poorly if the posterior is far from the prior (or chosen jump distribution). We balance these issues by interpolating between the two algorithms.

We return to the setup of the affine-invariant MCMC algorithm, and target the product measure $\pi^*$ on $X^N$ given by \cref{eq:pistar}. Let $\R^D$ represent a discretization of $X$. Given a set of particles $S$, of size $|S|$, we define the normalized centred data matrix $V_S \in \R^{D\times |S|}$ by
\[
(V_S)_{:, j} := \frac{1}{\sqrt{|S|-1}}\left(u^{(j)} - \overline{u}_S\right),
\]
where $\overline{u}_S = \frac{1}{|S|}\sum_{u^{(j)} \in S} u^{(j)}$ is the sample mean of the particles in $S$.
The matrix $V_SV_S^\top$ then provides the sample covariance of the particles $S$.

In the inner loop of the affine-invariant MCMC algorithm, we use gpCN with the choice of jump distribution $N(m,C)$ with $C = C_* + \gamma^2V_SV_S^\top$ for some $\gamma > 0$, where the measures $N(0,C_0)$ and $N(0,C_*)$ are assumed equivalent. This jump distribution possesses the requisite prior equivalence due to the following simple proposition. 
\begin{proposition}
\label{prop:hybrid_equiv}
Let $C_* \in \mathcal{L}(X;X)$ be positive and trace-class, and let $W \in \mathcal{L}(X;X)$ have finite rank. Then the measures $N(0,C_*)$ and $N(0,C_*+W)$ are equivalent.
\end{proposition}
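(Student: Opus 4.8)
The natural tool is the Feldman--H\'ajek dichotomy for Gaussian measures on a separable Hilbert space: two centred Gaussians $N(0,C_1)$ and $N(0,C_2)$ are either mutually singular or equivalent, and equivalence holds precisely when their Cameron--Martin spaces coincide, $\mathrm{Ran}(C_1^{1/2}) = \mathrm{Ran}(C_2^{1/2})$, and the operator $C_1^{-1/2}C_2C_1^{-1/2} - I$ is Hilbert--Schmidt on $\overline{\mathrm{Ran}(C_1^{1/2})}$. First I would check that $N(0,C_*+W)$ is a bona fide Gaussian measure: $C_*+W$ is self-adjoint and, being the sum of the trace-class $C_*$ and the finite-rank (hence trace-class) $W$, is itself trace-class, while positivity of $C_*+W$ is the standing assumption needed for the right-hand measure to be defined (and holds with $W \ge 0$ as in our application). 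With $C_1 = C_*$ and $C_2 = C_*+W$ the mean condition is vacuous since both measures are centred, so everything reduces to the two covariance conditions.

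The plan is to factor $C_* + W = C_*^{1/2}(I+K)C_*^{1/2}$ with $K := C_*^{-1/2}WC_*^{-1/2}$, so that the Feldman--H\'ajek operator $C_*^{-1/2}(C_*+W)C_*^{-1/2} - I$ is exactly $K$. The finite-rank hypothesis does all the work here: if $K$ is well defined it is finite rank, and every finite-rank operator is Hilbert--Schmidt, so the second condition is immediate. For the coincidence of Cameron--Martin spaces I would argue that, with $W \ge 0$, the middle factor $I+K$ is positive, bounded and boundedly invertible, whence $\mathrm{Ran}((C_*+W)^{1/2}) = \mathrm{Ran}(C_*^{1/2})$; if preferred, the same equality can be read off from the variational description of an operator range, $x \in \mathrm{Ran}(A^{1/2})$ iff $\sup_{y\neq 0}|\langle x,y\rangle|^2/\langle Ay,y\rangle < \infty$, using $\langle(C_*+W)y,y\rangle \ge \langle C_*y,y\rangle$ for one inclusion and a bound of $\langle Wy,y\rangle$ by $\langle C_*y,y\rangle$ for the other.

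The main obstacle is exactly the well-definedness of $K = C_*^{-1/2}WC_*^{-1/2}$: since $C_*^{-1/2}$ is unbounded, this requires $\mathrm{Ran}(W) \subseteq \mathrm{Ran}(C_*^{1/2})$, i.e.\ the range of $W$ must lie inside the Cameron--Martin space of $C_*$. For finite-rank $W$ this is a check on finitely many vectors, and it is also precisely what makes the reverse Cameron--Martin inclusion go through: writing $W = \sum_{i=1}^n \sigma_i\,\psi_i\otimes\psi_i$ with $\psi_i \in \mathrm{Ran}(C_*^{1/2})$ gives $\langle Wy,y\rangle = \sum_i \sigma_i|\langle\psi_i,y\rangle|^2 \le c\,\langle C_*y,y\rangle$, which simultaneously bounds the range and shows $I+K \ge I$ is invertible. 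I would flag that this range condition is not cosmetic: if a range vector of $W$ escaped $\mathrm{Ran}(C_*^{1/2})$, a Cameron--Martin shift argument shows the two measures are mutually singular, so it should be recorded as a hypothesis (it holds automatically in the finite-dimensional discretisation $\R^D$, where both covariances are nondegenerate).

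A cleaner alternative I would keep in reserve is to bypass Feldman--H\'ajek entirely: diagonalise the positive finite-rank $W$ and, using transitivity of equivalence of measures, add one rank-one term $\sigma\,\psi\otimes\psi$ at a time. Each step is the equivalence $N(0,C_*) \sim N(0,C_*+\sigma\,\psi\otimes\psi)$, verifiable by an explicit Radon--Nikodym computation using the representation of the perturbed measure as the law of $Z + \xi\psi$ with $Z \sim N(0,C_*)$ and $\xi \sim N(0,1)$ independent; this again needs $\psi$ in the Cameron--Martin space so that the single shift is admissible, reconfirming that the range condition is the crux of the argument.
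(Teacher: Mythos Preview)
Your approach via Feldman--H\'ajek is exactly the paper's: the paper's proof is the single line that $C_*^{-1/2}(C_*+W)C_*^{-1/2}-I = C_*^{-1/2}WC_*^{-1/2}$ is Hilbert--Schmidt because $W$ has finite rank. You are, however, more careful than the paper, and rightly so. You flag that $K = C_*^{-1/2}WC_*^{-1/2}$ is only a bounded operator when $\mathrm{Ran}(W) \subseteq \mathrm{Ran}(C_*^{1/2})$, and that without this the Cameron--Martin spaces differ and the two Gaussians are mutually singular by the very dichotomy being invoked. The paper's proof silently assumes this range condition and checks only the Hilbert--Schmidt part; as stated, the proposition is false in general (take $W = \psi\otimes\psi$ with $\psi \in X \setminus \mathrm{Ran}(C_*^{1/2})$, so that $\psi$ lies in the Cameron--Martin space of $C_*+W$ but not of $C_*$). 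More to the point, in the paper's intended application $W = V_SV_S^\top$ is built from ensemble members which, after burn-in, are draws from a measure equivalent to $N(0,C_*)$, and such draws almost surely fall outside $\mathrm{Ran}(C_*^{1/2})$ in infinite dimensions, so the extra hypothesis you identify does not come for free there. Your parenthetical that everything is fine on the finite-dimensional discretisation $\R^D$ is the honest practical resolution; the gap you have located is in the infinite-dimensional statement itself, not in your argument.
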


\begin{proof}
By the Feldman-Hajek theorem \cite{da2002second}, it is sufficient to show that the operator $C_*^{-\frac{1}{2}}(C_*+W)C_*^{-\frac{1}{2}}-I = C_*^{-\frac{1}{2}}WC_*^{-\frac{1}{2}}$ is Hilbert-Schmidt; this follows immediately since $W$ has finite rank. 
\end{proof}

The sample covariance $V_SV_S^\top$ has rank at most $|S|-1$; indeed its range is the lowest dimensional hyperplane passing through the elements of $S$, shifted to intersect the origin. By the assumed equivalence of $N(0,C_0)$ and $N(0,C_*)$ the equivalence of the prior and jump distributions follows.

Given a gpCN jump parameter $\beta \in (0,1]$ and RWM jump parameter $\lambda > 0$, we make the choice $\gamma = \lambda/\beta$ so that the proposal distribution is given by
\[
Q(u_k^{(j)},\cdot) = N\left(m + \sqrt{1-\beta^2}(u_k^{(j)}-m), \beta^2\left(C_* + \frac{\lambda^2}{\beta^2}V_SV_S^\top\right)\right).
\]
Hence, as $\beta \to 0$ we recover the affine-invariant proposal as in \cref{alg:affine}, and as $\lambda\to 0$ we recover the gpCN algorithm for each particle . Note that, for $\beta > 0$, the proposals are not restricted to a hyperplane dictated by the initial ensemble. The full algorithm is given in \cref{alg:hybrid}. We refer to this algorithm as \hybrid (\hybridname).

\begin{algorithm}
\begin{algorithmic}
\caption{\hybrid MCMC sampling}
\label{alg:hybrid}
\State Choose initial ensemble of particles $\{u_1^{(n)}\}_{n=1}^N \subseteq X$ and jump parameters $\beta \in (0,1]$, $\lambda > 0$. Set $\gamma = \lambda/\beta$.
\For{$k=1:K$}
\For{$n=1:N$}
\State Choose $S \subseteq u_k^{(-n)}$ and propose
\begin{align*}
\hat{u}_k^{(n)} &= m + \sqrt{1-\beta^2}(u_k^{(n)}-m) + \beta \xi + \lambda\sum_{j \in S} (V_S)_{:, j}z_j,\\
&\quad z_j\iid N(0,1),\quad\xi \sim N(0,C_*).
\end{align*}
\State Set $u^{(n)}_{k+1} = \hat{u}_k^{(n)}$ with probability
\[
\min\left\{1,\exp\left(\Phi(u_k^{(n)}) - \Phi(\hat{u}_k^{(n)}) + I_{C}(u_k^{(n)}) - I_{C}(\hat{u}_k^{(n)})\right)\right\}
\]
\State where $C = C_*+\gamma^2 V_SV_S^\top$, or else set $u_{k+1}^{(n)} = u_k^{(n)}$.
\EndFor
\EndFor
\State\Return $\{u^{(n)}_k\}_{k,n=1}^{K,N}$.
\end{algorithmic}
\end{algorithm}

\begin{remark}
\begin{enumerate}[(i)]
\item A special case of \cref{alg:hybrid} is $m = 0$ and $C_* = C_0$. Denote $P_0 = C_0^{-1}$ the prior precision operator, noting that this is often a local operator and hence sparse when implemented numerically. In this case, $I_{C}$ reduces to
\begin{align}
I_{C}(u) = \frac{1}{2}\left\langle V_S^\top P_0 u, (\gamma^{-2}I + V_S^\top P_0 V_S)^{-1} V_S^\top P_0 u \right\rangle_X,
\end{align}
where the matrix being inverted is small $(N\times N)$. 

\item For $\beta > 0$ this algorithm is \emph{not} affine-invariant. However, it is approximately affine-invariant for small $\beta$. Writing $\hat{u} = R(u,\xi)$ for the proposal, as in \cref{ssec:affine}, we have
\begin{align*}
R(Au+b,\xi) &= AR(u,\xi) + b + \beta(I-A)\xi + (\sqrt{1-\beta^2}-1)b\\
&= AR(u,\xi) + b + \mathcal{O}(\beta).
\end{align*}

\item There are two jump parameters that may be tuned in the \hybrid algorithm: $\beta$ corresponding to the pCN jump size and $\lambda$ corresponding to the size of the prior perturbation.  Jointly finding the optimal values of these parameters can be difficult in practice; in all numerical examples in this article we simply fix $\lambda = 0.2$ and adapt $\beta$ so that the acceptance rate lies in the interval $(0.15,0.3)$, which appears to be effective empirically. In practice we find starting $\beta$ at a large value is beneficial, allowing for the initial ensemble to adapt to find the effective support of the posterior, before it is reduced to allow for the neighbourhood of the corresponding hyperplane to be explored without too many rejections.

\item In order to provide shift invariance of the proposal it could be tempting to make the choice $m = \overline{u}_S$, the sample mean. However, for a finite number of particles, $\overline{u}_S$ does not lie in the Cameron-Martin space of the prior, and so the acceptance probability is not be well-defined due to measure singularity. One could however consider $m = P\overline{u}_S$ for some projection $P$ onto the Cameron-Martin space.

\item Though one has free choice over the subset $S\subseteq u^{(-n)}$ used to estimate the covariance, we found the choice $S = u^{(-n)}$ to be effective in practice. However, choosing $S$ to be a proper subset of $u^{(-n)}$ may be beneficial when the posterior is multimodal, as well as providing robustness with respect to outliers during burn-in. In the case of multiple separated modes, our sampler would have to additionally be combined with a method that allows samples to switching between modes to accurately measure the relative importance of individual modes \cite{LindseyWeareZhang22,GabrieRotskoffVandeneijnden21}.

\end{enumerate}
\end{remark}

We note that although the dimension $D$ of an inverse problem may be large, often the effective dimension of the problem is much smaller -- the posterior may be concentrated on some low-dimension submanifold of $X$, relative to the prior. It is for this reason that we expect the above algorithm to remain effective when $D$ is large for a finite number of particles. The paper \cite{agapiou2017importance} introduces a quantitative notion of effective dimension for linear Gaussian Bayesian inverse problems, defined in terms of the prior-weighted Gauss-Newton Hessian $Q$. The operator $Q$ may be used to estimate the dimension of the subspace that is informed by the data, relative to the prior. This dimension then gives a rough indication for the order of magnitude of number of particles that should be used in order to achieve good mixing with the \hybrid algorithm. In the notation of \cref{ssec:linear_numerics}, the effective dimension $\mathsf{efd} \leq D$ is defined as\footnote{The paper \cite{agapiou2017importance} also considers an alternative definition of effective dimension simply given by $\tr(Q)$, however this is not bounded above by the dimension of the state space $X$.}
\[
\mathsf{efd} = \tr(Q(I+Q)^{-1}),\quad Q = C_0^{\frac{1}{2}}A^*\Gamma^{-1}AC_0^{-\frac{1}{2}}.
\]

\subsection{An alternative hybrid algorithm}

The paper \cite{coullon2021ensemble} introduces an algorithm that combines affine-invariant sampling with pCN, the Functional Ensemble Sampler (FES), wherein an affine-invariant method is applied on a subspace defined via the prior distribution and pCN is applied on the complementary subspace. Specifically, the affine-invariant method is applied on the subspace spanned by the first $M$ modes of the Karhunen-Lo\'eve expansion of the prior. Thus, if the prior Gaussian distribution on $u$ has Karhune-Lo\'eve expansion
\[
u = \sum_{j=1}^\infty \sqrt{\lambda_j}\xi_j\varphi_j,\quad \xi_j \iid N(0,1),
\]
then a Gibbs-type MCMC algorithm is used to perform affine-invariant updates on the components $\xi_1,\ldots,\xi_M$, and pCN on the remaining components $\xi_{M+1},\ldots$. This is effective when the posterior is relatively close to the prior, however when the data is particularly informative, these prior modes do not represent the posterior well and performance is closer to plain pCN. We consider an adjustment of this algorithm, in the spirit of the hybrid algorithm introduced above, which adapts the subspace based upon the current ensemble. That is, given a subspace dimension $M$ and an ensemble $S$ we diagonalize the sample covariance $V_S V_S^\top$, and truncate this expansion after the first $M$ singular vectors. We then effectively perform an (approximately) affine-invariant update on the span of the first $M$ singular vectors of the sample covariance, and pCN on the orthogonal complement. The specific algorithm is given in \cref{alg:hybrid_proj}, and we refer to this as the \hybridp (\hybridpname) algorithm; for convenience we assume the prior is white as discussed in \cref{rem:noncenter} to avoid the requirement for simultaneous diagonalization of the prior and sample covariances. This method almost agrees with the \hybrid method introduced above, except the contributions to the proposal covariance arising from the sample covariance and the prior are performed on orthogonal subspaces -- again this corresponds to a low-rank update of the prior, and so we may use \cref{prop:hybrid_equiv} to see that the algorithm is well-defined. We compare the \hybrid and \hybridp algorithms with pCN and FES in the following section.

\begin{algorithm}
\begin{algorithmic}
\caption{\hybridp MCMC sampling}
\label{alg:hybrid_proj}
\State Choose initial ensemble of particles $\{u_1^{(n)}\}_{n=1}^N \subseteq X$, jump parameters $\beta \in (0,1]$, $\lambda > 0$, and subspace dimension $M$. Set $\gamma = \lambda/\beta$.
\For{$k=1:K$}
\For{$n=1:N$}
\State Choose $S \subseteq u^{(-n)}$ and let $U\Sigma U^\top = V_SV_S^\top$ be the SVD of the covariance of $S$. 
\State Denote $\Sigma_M$ the restriction of $\Sigma$ to the $M$ largest singular values, and $U_M$ the 
\State corresponding columns of $U$. Propose
\begin{align*}
\hat{u}_k^{(n)} &= m + \sqrt{1-\beta^2}(u_k^{(n)}-m) + \beta \left[U_M \Big(\gamma\Sigma_M^{\frac{1}{2}} - I\Big)U_M^\top\xi + \xi\right],\quad \xi \sim N(0,I).
\end{align*}
\State Set $u_{k+1}^{(n)} = \hat{u}_k^{(n)}$ with probability
\[
\min\left\{1,\exp\left(\Phi(C_0^{\frac{1}{2}}u_k^{(n)}) - \Phi(C_0^{\frac{1}{2}}\hat{u}_k^{(n)}) + J(U_M^\top u_k^{(n)}) - J(U_M^\top\hat{u}_k^{(n)})\right)\right\}
\]
\State where $J(z) = \frac{1}{2}\|z\|_{\R^M}^2 - \frac{1}{2\gamma^2}\left\langle z,\Sigma_M^{-1}z\right\rangle_{\R^M}$, or else set $u_{k+1}^{(n)} = u_k^{(n)}$.
\EndFor
\EndFor
\State\Return $\{C_0^{\frac{1}{2}}u^{(n)}_k\}_{k,n=1}^{K,N}$.
\end{algorithmic}
\end{algorithm}

\section{Numerical illustrations}
\label{sec:numerics}

We numerically compare the behaviour of the pCN, FES and hybrid algorithms for three different inverse problems. We first consider a linear inverse problem, to investigate the effect of the number of particles and the dimension of the problem. We then consider a nonlinear problem, based on the setup of \cite{Garbuno-InigoNuskenReich20} as well as a generalization, to compare the effect of the sharpness of the posterior distribution relative to the prior on the behavior of the algorithms. Finally we consider a high-dimensional problem with a level-set prior, where no gradients of the likelihood are available.

\subsection{A linear regression problem}
\label{ssec:linear_numerics}
We consider first the case where the forward map is linear, the noise is additive Gaussian, and the prior is Gaussian. In this setup the posterior is Gaussian with a known closed form, and so we can directly compare the result of the sampling with the true posterior in order to assess the accuracy. Let $\Omega = (0,2\pi)$ and define the observation operator $A:C^0(\Omega)\to \R^J$ by $(Av)_j = v(d_j)$. We assume we have data $y \in\ \R^J$  arising from the model
\[
y = Au + \eta,\quad\eta \sim N(0,\gamma^2 I)
\]
for some $\gamma > 0$. The true state $u^\dagger$ generating the data is taken to be $u^\dagger(x) = \sin(x)/2$. We observe the solution at $J = 25$ points, $d_j = 2\pi j/J$, and fix $\gamma = 10^{-3}$ so that the relative $\ell^2$ error on the data is $0.0316\%$. The problem is discretized on a uniform grid of $D$ points.
The prior is taken to be of Mat\'ern type $\mu_0 = N(0,C_0)$, $C_0 = (I-\Delta)^{-1}$, where $\Delta$ is the Laplacian with homogeneous Neumann boundary conditions. The resulting effective dimension of the problem is then approximately $25$. The posterior has closed Gaussian form $\mu = N(m_{\text{p}},C_{\text{p}})$, where
\[
C_{\text{p}}^{-1} = A\Gamma^{-1}A^* + C_0^{-1},\quad C_{\text{p}}^{-1}m_{\text{p}} = A^*\Gamma^{-1}y,
\]
which may be used to assess the accuracy of sampling methods.

\subsubsection{Comparison of algorithms}
We compare the performance of both hybrid algorithms introduced in this paper with the FES algorithm and the pCN algorithm on the above problem. We fix dimension $D=100$ and $N=40$ particles. We do not consider the AIES algorithm, nor the ALDI algorithm of \cite{Garbuno-InigoNuskenReich20}, since we know that these only provide subspace sampling when $N\leq D$. We generate $10^5$ samples per particle so that $4\times 10^6$ samples are generated in total for each method, with the same number of likelihood evaluation required in each case; the first $25\%$ of samples for each particle are discarded as burn-in. For the pCN method, independent chains are run for each particle.

We estimate the autocorrelations for each particle chain and average these over the particles. That is, given samples $\{u_k^{(n)}\}$ and a scalar-valued function $g:X\to\R$, we estimate the function $c:\N\to\R$,
\[
c(j) = \frac{1}{N}\sum_{n=1}^N\frac{c_j^{(n)}}{c_0^{(n)}},\quad c_j^{(n)} = \frac{1}{K}\sum_{k=1}^{K-j}\big(g(u_k^{(n)})-\overline{g}^{(n)}\big)\big(g(u_{k+j}^{(n)})-\overline{g}^{(n)}\big),\quad \overline{g}^{(n)} = \frac{1}{N}\sum_{k=1}^K g(u_k^{(n)}),
\]
where $c_0$ is the sample variance. The area under the graph of $c$ is inversely proportional to the effective number of statistically independent samples in the chain, and so rapid decay of $c$ is desired for an effective sampling algorithm. Throughout this section we will take $g(u) = \|u\|_{L^2}^2$. In \cref{fig:linear_ac_alg} we show the autocorrelations for the four different algorithms. In all cases the jump parameter $\beta$ is adapted so that the acceptance rate lies in the interval $(0.15,0.3)$, and $\lambda = 0.2$ is fixed. In the case of the FES algorithm, the dimension of the subspace AIES is performed upon is chosen as $K=10$, and stretch moves with parameter $a=2$ are performed as suggested in \cite{coullon2021ensemble}; the pCN jump parameter is adapted as above. We see that the autocorrelations decay significantly faster for the algorithms introduced in this paper. The FES method performs similarly to the pCN algorithm as the posterior eigenbasis differs significantly from the prior eigenbasis due to the sharpness of the likelihood. Note that although the autocorrelation for the pCN algorithm decays fast initially due to small scale oscillations, the asymptotic decay of the autocorrelations appears to be faster for the other algorithms. The behavior of the autocorrelations can be further understood from \cref{fig:linear_traces}, which shows the traces of the squared norm of an individual particle for three of the algorithms\footnote{The \hybridp chain is omitted here for brevity; it has the same qualitative behavior as the \hybrid chain.} -- here the long-term correlations for the pCN and FES chains can be observed, and contrasted with the \hybrid chain.

\begin{figure}[bt]
    \centering
    \begin{tikzpicture}[scale=0.8]
	\begin{axis}[
   		width=8cm,
    	height=5cm,
    	title={},
    	xlabel={\large Lag},
    	ylabel={\large Autocorrelation},
    	xmin=0, xmax=5000,
    	ymax=1,
    	ymin=-0.1,
    	ytick={0,0.2,0.4,0.6,0.8,1},
    	grid=major,
    	legend style={draw=none,
        	          at={(1.3, 0.95)},
            	      anchor=north},
    	legend style={font=\large},
    	legend cell align=left]
    	\addplot[pcn,mark=none,style=very thick]  table [x index=0, y index=1, col sep=tab] {data/linear_acs_alg.txt};
    	\addlegendentry{pCN}
    	\addplot[fes,mark=none,style=very thick]  table [x index=0, y index=2, col sep=tab] {data/linear_acs_alg.txt};
    	\addlegendentry{FES}
    	\addplot[hybrid,mark=none,style=very thick]  table [x index=0, y index=3, col sep=tab] {data/linear_acs_alg.txt};
    	\addlegendentry{\hybrid}
    	\addplot[hybrid-proj,mark=none,style=very thick]  table [x index=0, y index=4, col sep=tab] {data/linear_acs_alg.txt};
    	\addlegendentry{\hybridp}
    	\addplot[black!60!white,mark=none,style=thick,dashed] coordinates {(0,0) (5000,0)};
    	\draw[fill=black!60!white,opacity=.2,draw=none] (axis cs:0,-0.1) -- (axis cs:5000,-0.1) -- (axis cs:5000,0.1) -- (axis cs:0,0.1) -- cycle;
	\end{axis}
	\end{tikzpicture}
    \caption{Autocorrelations for the linear regression problem for quantity $\|u\|_{L^2}^2$ for the 4 different sampling algorithms, averaged over particles.}
    \label{fig:linear_ac_alg}
\end{figure}
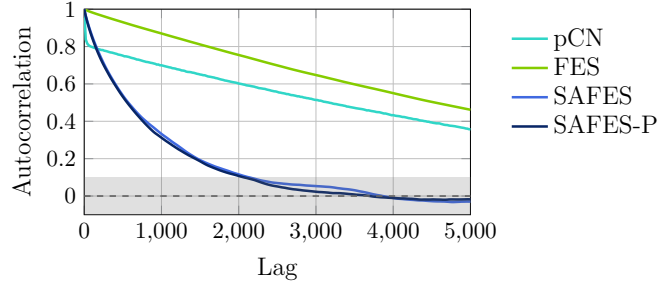

In \cref{fig:linear_densities} we show kernel density estimates for marginals corresponding to two point evaluations, at points $x_1 = 0.01$, $x_2 = 0.63$, for the four different algorithms compared with the true posterior densities, illustrating the accuracy of the methods. We also show (thinned) scatter plots of the point evaluation samples for all particles, illustrating how the different chains mix: the hybrid chains can be seen to be mixing significantly better than the FES chain, which in turn mixes significantly better than the pCN chain. \Cref{tab:linear_errors} compares the sample mean and covariance from the different chains with the true posterior values, along with the multivariate potential scale reduction factor (MPSRF) \cite{brooks1998general}; the latter is computed using inter- and intra-chain correlations, with a value closer to 1 indicating better mixing/convergence. These further illustrate the the mixing properties of the algorithms considered.

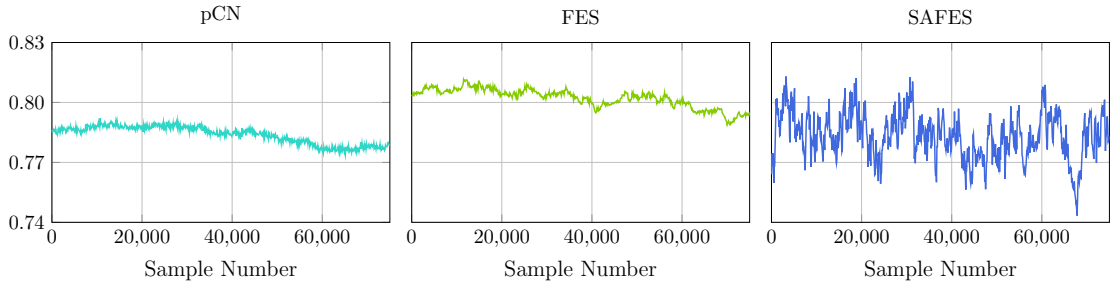
\begin{figure}
    \centering
    \begin{tikzpicture}[scale=0.7]
	\begin{axis}[
   		width=8cm,
    	height=5cm,
    	title={pCN},
    	xlabel={\large Sample Number},
    	xmin=0, xmax=75000,
		ymin=0.74,ymax=0.83,
		ytick={0.74,0.77,0.8,0.83},
		y tick label style={
        /pgf/number format/.cd,
            fixed,
            fixed zerofill,
            precision=2,
        /tikz/.cd
    	},
    	grid=major,
    	scaled x ticks = false, 
		legend style={draw=none,
        	          at={(1.5, 0.95)},
            	      anchor=north},
    	legend style={font=\large},
    	legend cell align=left]
    	\addplot[pcn,mark=none,style=thick]  table [x index=0, y expr=\thisrowno{1}*2*pi, col sep=tab, each nth point={5}] {data/linear_traces.txt};
	\end{axis}
	\end{tikzpicture}
	\begin{tikzpicture}[scale=0.7]
	\begin{axis}[
   		width=8cm,
    	height=5cm,
    	title={FES},
    	xlabel={\large Sample Number},
    	xmin=0, xmax=75000,
    	ymajorticks=false,
		ymin=0.74,ymax=0.83,
		ytick={0.74,0.77,0.8,0.83},
    	grid=major,
		legend style={draw=none,
        	          at={(1.5, 0.95)},
            	      anchor=north},
    	legend style={font=\large},
    	 scaled x ticks = false, 
    	legend cell align=left]
    	\addplot[fes,mark=none,style=thick]  table [x index=0, y expr=\thisrowno{2}*2*pi, col sep=tab, each nth point={5}] {data/linear_traces.txt};
	\end{axis}
	\end{tikzpicture}	
	\begin{tikzpicture}[scale=0.7]
	\begin{axis}[
   		width=8cm,
    	height=5cm,
    	title={\hybrid},
    	xlabel={\large Sample Number},
    	xmin=0, xmax=75000,
    	ymajorticks=false,
		ymin=0.74,ymax=0.83,
		ytick={0.74,0.77,0.8,0.83},
    	grid=major,
    	scaled x ticks = false, 
		legend style={draw=none,
        	          at={(1.5, 0.95)},
            	      anchor=north},
    	legend style={font=\large},
    	legend cell align=left]
    	\addplot[hybrid,mark=none,style=thick]  table [x index=0, y expr=\thisrowno{3}*2*pi, col sep=tab, each nth point={5}] {data/linear_traces.txt};
	\end{axis}
	\end{tikzpicture}
    \caption{Traces of $\|u\|_{L^2}^2$ for the linear regression problem for 3 of the different sampling algorithms.}
    \label{fig:linear_traces}
\end{figure}

\begin{figure}\centering
    \begin{tikzpicture}
    \node at (-1.5,5) {\includegraphics[width=0.41\textwidth,trim=0cm 0cm 1cm 0cm,clip]{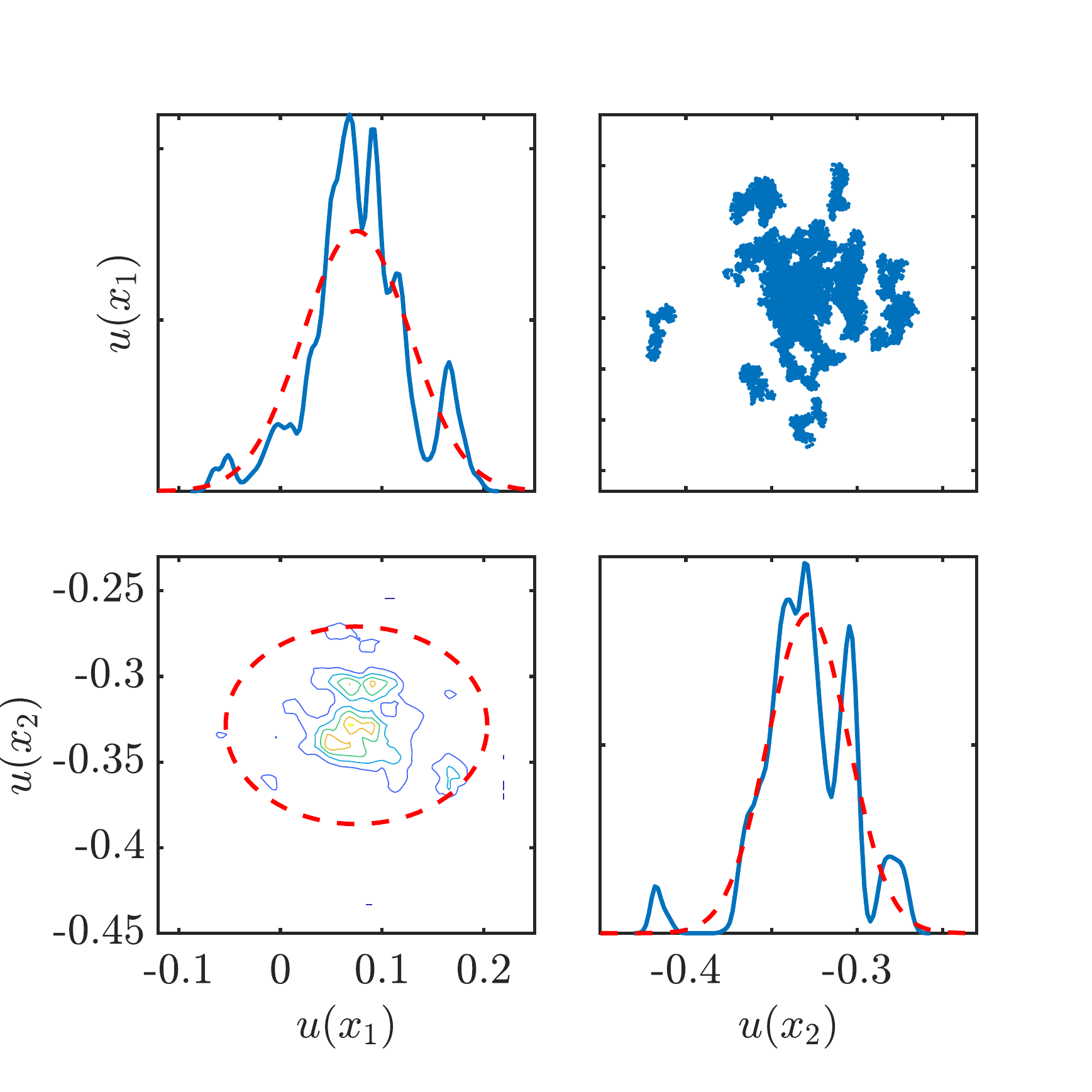}};
    \node at (5,5) {\includegraphics[width=0.41\textwidth,trim=0cm 0cm 1cm 0cm,clip]{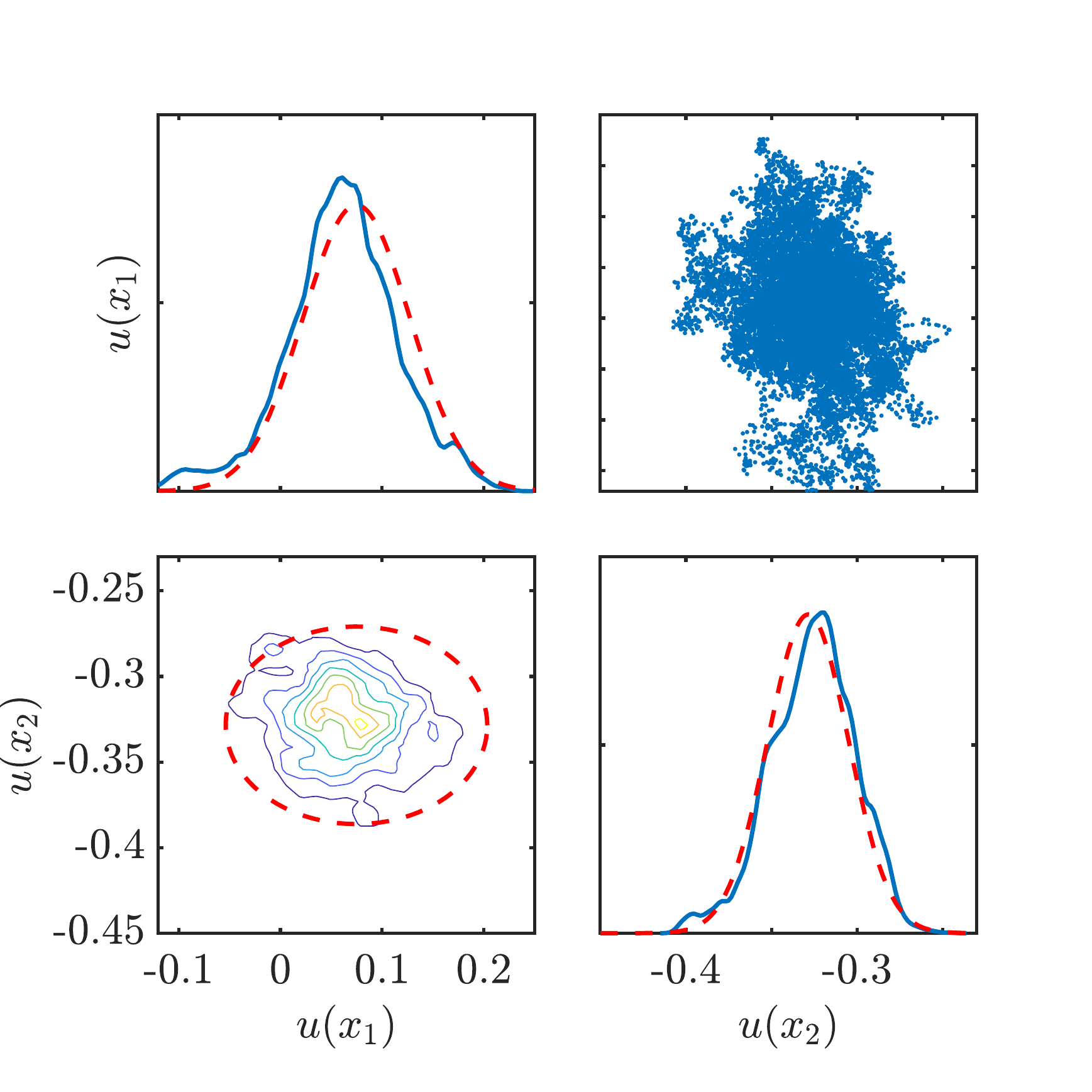}};
    \node at (-1.5,-1.5) {\includegraphics[width=0.41\textwidth,trim=0cm 0cm 1cm 0cm,clip]{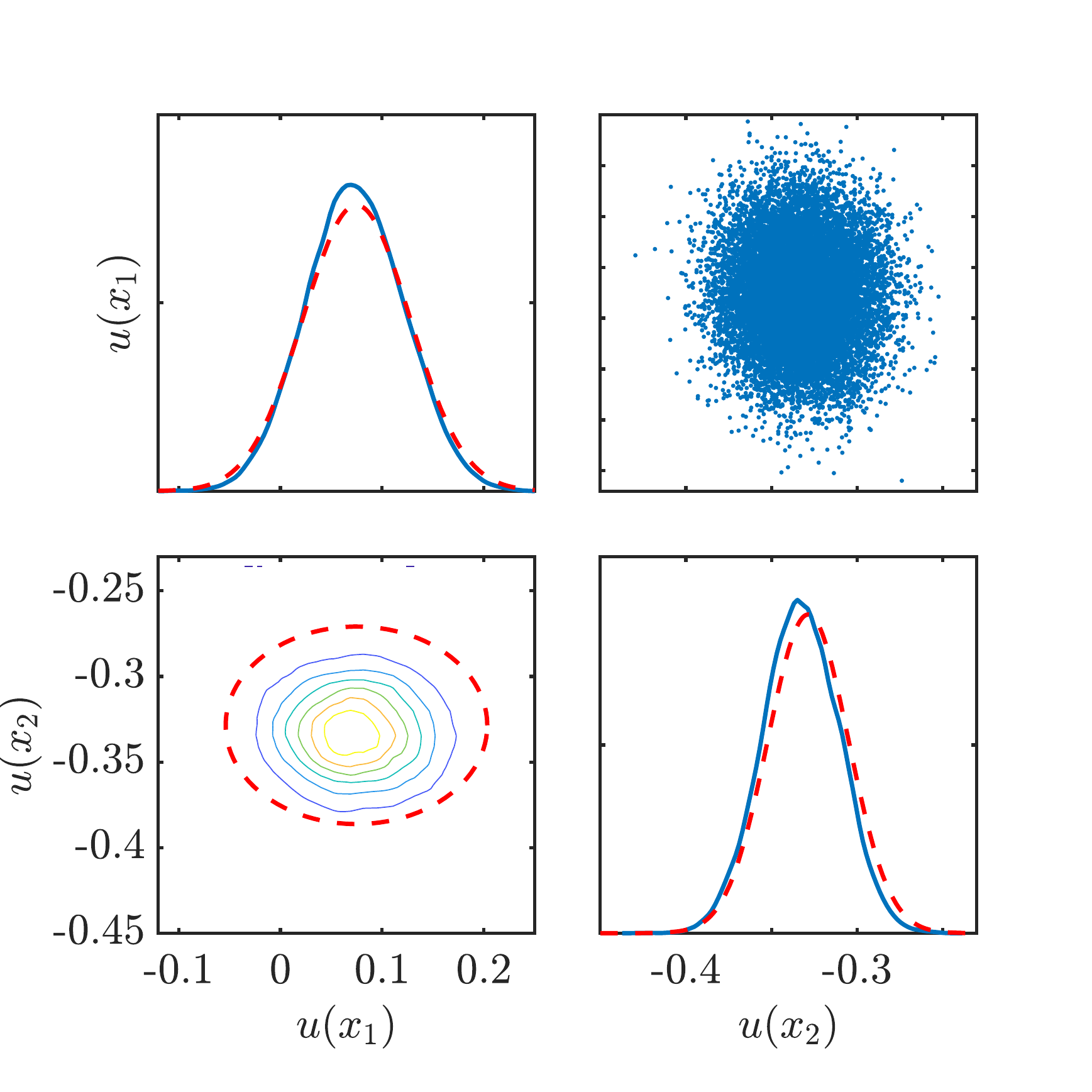}};
    \node at (5,-1.5) {\includegraphics[width=0.41\textwidth,trim=0cm 0cm 1cm 0cm,clip]{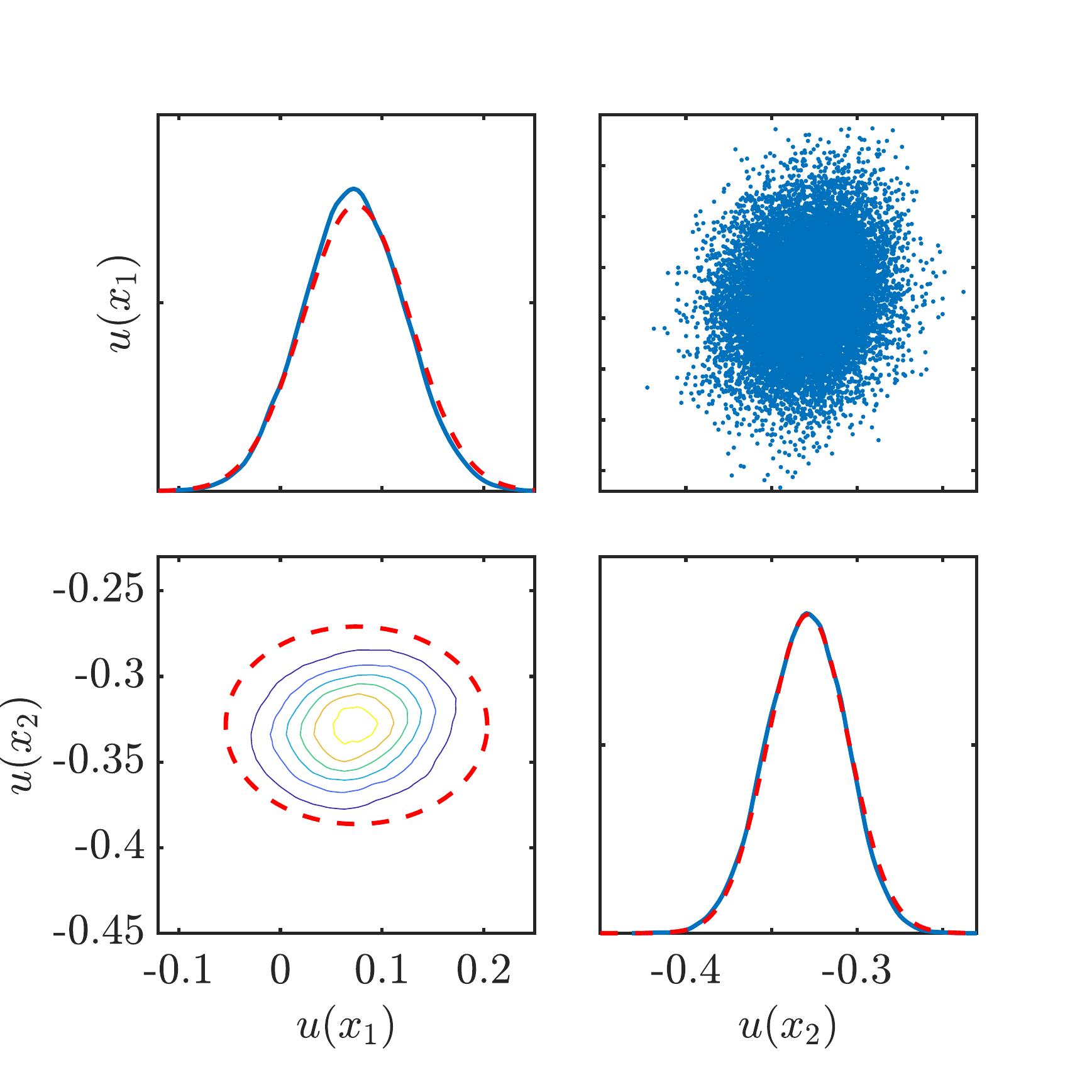}};
    \node at (-1.2,8) {\small pCN};
    \node at (5.3,8) {\small FES};
    \node at (-1.2,1.5) {\small \hybrid};
    \node at (5.4,1.5) {\small \hybridp};
    \end{tikzpicture}
    \caption{Density estimates and scatter plots for the linear regression problem for marginals corresponding to 2 point evaluations, for the 4 different sampling algorithms. The dotted curves represent the true posterior density on the diagonal, and 95\% credible regions below the diagonal.}
    \label{fig:linear_densities}
\end{figure}

\begin{table}
    \centering
    \caption{The relative $\ell^2$ errors in the sample mean $\mathbb{E}(u)$ and sample covariance $\mathrm{Cov}(u)$ for the linear problem, and the MPSRF the 4 different sampling algorithms.}
    \label{tab:linear_errors}
    \begin{tabular}{r|l|l|l}
        & Mean error & Covariance Error & MPSRF\\
        \hline
        pCN & 0.00834 & 0.964 & 17.4\\
        FES & 0.0207  & 0.759 & 4.24\\      
        \hybrid & 0.00645 & 0.404 & 1.074\\
        \hybridp & 0.00784 & 0.390 & 1.075\\
    \end{tabular}
    
\end{table}

\subsubsection{Dependence on number of particles}
As it has been observed that the number of particles $N$ must exceed the dimension of the problem $D$ in order to sample the full posterior when using an affine-invariant sampling method, we study the behavior of the \hybrid algorithm for various numbers of particles $N$. We fix $D = 100$ and vary $N$ between 5 and 40. The number of samples $S$ is varied so that $S\times N = 4\times 10^6$, i.e., the total number of likelihood evaluation remains the same in all cases. The resulting autocorrelations are shown in \cref{fig:linear_ac_npart}. It can be observed that mixing is improved when additional particles are used, however each successive addition of particles yields less of an improvement: the autocorrelation curves accumulate. This is likely related to the effective dimension of the problem being relatively small, as well as the covariance being more accurately estimated. 

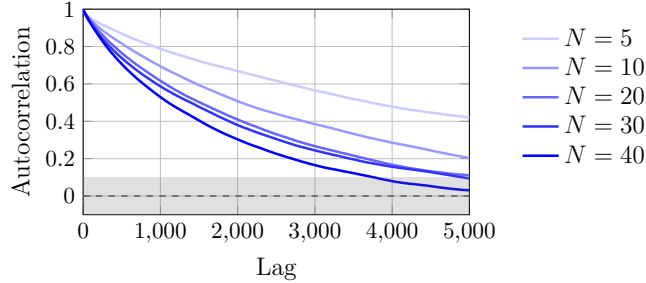
\begin{figure}
    \centering
    \begin{tikzpicture}[scale=0.8]
	\begin{axis}[
   		width=8cm,
    	height=5cm,
    	title={},
    	xlabel={\large Lag},
    	ylabel={\large Autocorrelation},
    	xmin=0, xmax=5000,
    	ymax=1,
    	ymin=-0.1,
    	ytick={0,0.2,0.4,0.6,0.8,1},
    	grid=major,
    	legend style={draw=none,
        	          at={(1.3, 0.95)},
            	      anchor=north},
    	legend style={font=\large},
    	legend cell align=left]
    	\addplot[blue!20,mark=none,style=very thick]  table [x index=0, y index=1, col sep=tab] {data/linear_acs_npart.txt};
    	\addlegendentry{$N=5$}
    	\addplot[blue!40,mark=none,style=very thick]  table [x index=0, y index=2, col sep=tab] {data/linear_acs_npart.txt};
    	\addlegendentry{$N=10$}
    	\addplot[blue!60,mark=none,style=very thick]  table [x index=0, y index=4, col sep=tab] {data/linear_acs_npart.txt};
    	\addlegendentry{$N=20$}
    	\addplot[blue!80,mark=none,style=very thick]  table [x index=0, y index=6, col sep=tab] {data/linear_acs_npart.txt};
    	\addlegendentry{$N=30$}
		\addplot[blue!100,mark=none,style=very thick]  table [x index=0, y index=8, col sep=tab] {data/linear_acs_npart.txt};
    	\addlegendentry{$N=40$}
    	\addplot[black!60!white,mark=none,style=thick,dashed] coordinates {(0,0) (5000,0)};
    	\draw[fill=black!60!white,opacity=.2,draw=none] (axis cs:0,-0.1) -- (axis cs:5000,-0.1) -- (axis cs:5000,0.1) -- (axis cs:0,0.1) -- cycle;
	\end{axis}
	\end{tikzpicture}
    \caption{Autocorrelations of the quantity $\|u\|_{L^2}^2$ for the \hybrid algorithm applied to the linear regression problem, for numbers of particles $N$, averaged over particles.}
    \label{fig:linear_ac_npart}
\end{figure}

\subsubsection{Dependence on dimension}
We now consider the effect of the discretization dimension $D$ on the performance of the \hybrid algorithm. We fix $N = 40$, $K = 10^5$ and vary $D=50\times 2^j$, $j=0,\ldots,4$. The resulting autocorrelations are shown in \cref{fig:linear_ac_dim}. The areas under the curves do not increase with discretization level, suggesting that the statistical performance of the algorithm is dimension-robust. This is in contrast to, for example, the AIES algorithm using the stretch move, which fails to be dimension-robust even when sufficient particles are used to ensure the correct distribution is targeted \cite{huijser2015properties}.

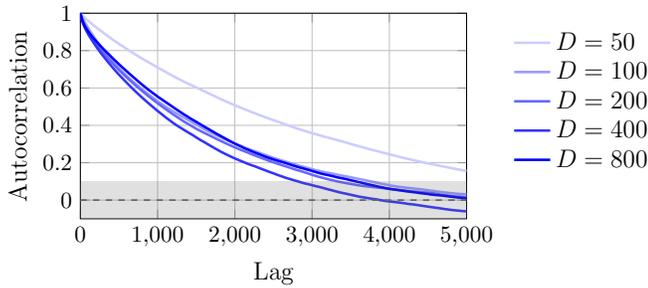
\begin{figure}
    \centering
    \begin{tikzpicture}[scale=0.8]
	\begin{axis}[
   		width=8cm,
    	height=5cm,
    	title={},
    	xlabel={\large Lag},
    	ylabel={\large Autocorrelation},
    	xmin=0, xmax=5000,
    	ymax=1,
    	ymin=-0.1,
    	ytick={0,0.2,0.4,0.6,0.8,1},
    	grid=major,
    	legend style={draw=none,
        	          at={(1.3, 0.95)},
            	      anchor=north},
    	legend style={font=\large},
    	legend cell align=left]
    	\addplot[blue!20,mark=none,style=very thick]  table [x index=0, y index=1, col sep=tab] {data/linear_acs_dim.txt};
    	\addlegendentry{$D=50$}
    	\addplot[blue!40,mark=none,style=very thick]  table [x index=0, y index=2, col sep=tab] {data/linear_acs_dim.txt};
    	\addlegendentry{$D=100$}
    	\addplot[blue!60,mark=none,style=very thick]  table [x index=0, y index=3, col sep=tab] {data/linear_acs_dim.txt};
    	\addlegendentry{$D=200$}
    	\addplot[blue!80,mark=none,style=very thick]  table [x index=0, y index=4, col sep=tab] {data/linear_acs_dim.txt};
    	\addlegendentry{$D=400$}
		\addplot[blue!100,mark=none,style=very thick]  table [x index=0, y index=5, col sep=tab] {data/linear_acs_dim.txt};
    	\addlegendentry{$D=800$}
    	\addplot[black!60!white,mark=none,style=thick,dashed] coordinates {(0,0) (5000,0)};
    	\draw[fill=black!60!white,opacity=.2,draw=none] (axis cs:0,-0.1) -- (axis cs:5000,-0.1) -- (axis cs:5000,0.1) -- (axis cs:0,0.1) -- cycle;
	\end{axis}
	\end{tikzpicture}
    \caption{Autocorrelations of the quantity $\|u\|_{L^2}^2$ for the \hybrid algorithm applied to the linear problem, for different discretization levels $D$, averaged over particles.}
    \label{fig:linear_ac_dim}
\end{figure}

\subsection{A nonlinear inverse problem: Darcy flow}

We now consider a case when the forward map is nonlinear.  We consider an example from \cite{Garbuno-InigoNuskenReich20} for reference, and then consider a modified version with smaller observational noise and a slower decaying prior so that the effective dimension of the problem is increased. Specifically, we consider a one-dimensional Darcy flow problem on spatial domain $\Omega = (0,2\pi)$, defining $\mathcal{S}:L^\infty(\Omega)\to C^0(\Omega)$ as the mapping from $u$ to $p$, where
\[
-\frac{\dee}{\dee x}\left(e^{u(x)}\frac{\dee p}{\dee x}(x)\right) = f(x),\quad x \in \Omega
\]
subject to periodic boundary conditions and $\int_\Omega p(x)\,\dee x = 0$. We make the choice 
\[
f(x) = \exp\left(-\frac{(x-\pi)^2}{10}\right) - c_f,\quad c_f = \int_\Omega \exp\left(-\frac{(x-\pi)^2}{10}\right)\,\dee x
\]
and define the observation operator $\mathcal{O}:C^0(\Omega)\to\R^J$ as in \cref{ssec:linear_numerics}. The nonlinear forward map is then defined by $\G = \mathcal{O}\circ\mathcal{S}$. We assume we have data $y \in \R^J$ arising from the model
\[
y = \G(u) + \eta,\quad \eta \sim N(0,\gamma^2 I)
\]
for some $\gamma > 0$. As in \cite{Garbuno-InigoNuskenReich20}, we take the true state $u^\dagger$ to be $u^\dagger(x) = \sin(x)/2$. A centered Gaussian prior $N(0,C_0)$ is used, and we consider two problems based on this setup:
\begin{enumerate}[(i)]
\item $C_0^{-1} = 4\left(\mu T - \Delta\right)^2$ where $T:L^2 (\Omega)\to L^2(\Omega)$ is given by $Tv = \frac{1}{|\Omega|}\int_\Omega v$, $\Delta$ is the Laplacian with periodic boundary conditions and $\mu = 100$. Moreover, $\gamma = 10^{-2}$ giving a relative error on the data of $3.81\%$.
\item $C_0^{-1} = I-\Delta$ and $\Delta$ is the Laplacian with homogenous Neumann boundary conditions. Moreover, $\gamma = 10^{-4}$ giving a relative error on the data of $0.0381\%$
\end{enumerate}
We observe the solution at $J = 10$ points, $d_j=2\pi j/J$, so that the first problem (i) is identical to the example considered in \cite{Garbuno-InigoNuskenReich20}. Problem (ii) is a modification with a more concentrated posterior, which is more difficult to sample with methods that heavily rely on the prior, such as pCN.
Note that \cite{Garbuno-InigoNuskenReich20} also proposed a gradient-free method for affine-invariant sampling via simulation of a Langevin-type equation; however, this method suffers the same issue as other affine-invariant methods in that the number of particles must exceed the dimension of the problem.

\subsubsection{Comparison of algorithms}
As for the linear case, we compare the four different algorithms on these problems. We fix dimension $D = 100$ and $N=40$ particles, and generate $10^5$ samples per particles, discarding the first $25\%$ as burn-in. Again, for the pCN method independent chains are run for each particle. We first consider problem (i): the autocorrelations for the quantity $\|u\|_{L^2}^2$ are shown in \cref{fig:nonlinear_ac}, kernel density estimates and scatter plots for marginals corresponding to point evaluations at $x_1 = 0.01$, $x_2 = 0.63$ are shown in the top row in \cref{fig:nonlinear_densities}, and MSPRFs are shown in \cref{tab:psrf}. Note that now there is no analytic form for the true posterior densities to compare with as in the linear case. We first note that, since the likelihood is relatively flat, the posterior is not too far from the prior and so a large step size may be used with pCN leading to fast autocorrelation decay. The \hybrid and \hybridp algorithms achieve similar autocorrelation decay, however that for FES is slower.  This is potentially due to FES only using knowledge of the prior eigenmodes but not the decay of its eigenvalues in the space where the AEIS is used, and the prior dominates in this problem. By decreasing the number of modes $M$, better performance could likely be achieved, noting that FES reduces to pCN in the case $M=0$. Nonetheless, the MPSRFs for all algorithms are all close to 1, and the density estimates are similar to one another, since a large number of samples are taken relative to the autocorrelation time.

\begin{table}
    \centering
    \caption{The MPSRFs for the different nonlinear problems for the 4 different sampling algorithms.}
    \label{tab:psrf}
    \begin{tabular}{r|l|l|l}
        Algorithm & Nonlinear (i) & Nonlinear (ii) & Level Set\\
        \hline
        pCN & 1.004 & 12.1 & 34.4\\
        FES & 1.002 & 1.13 & 68.5\\      
        \hybrid & 1.005 & 1.03 & 1.50\\
        \hybridp & 1.008& 1.03& 1.40\\
    \end{tabular}
    
\end{table}

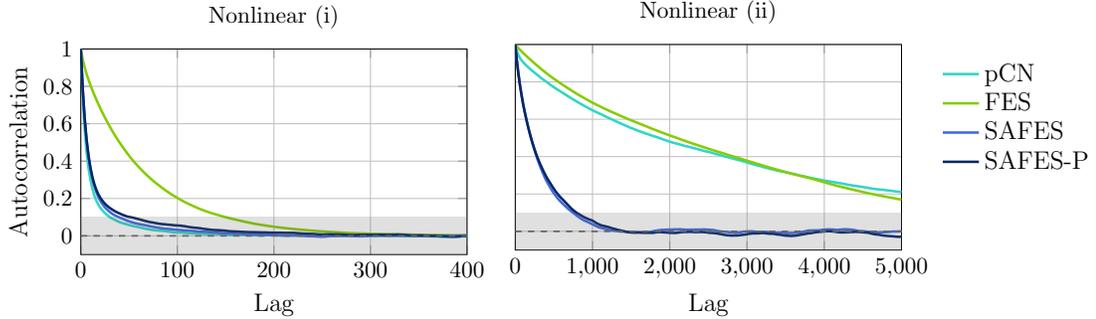
\begin{figure}
    \centering
    \begin{tikzpicture}[scale=0.8]
	\begin{axis}[
   		width=8cm,
    	height=5cm,
    	title={Nonlinear (i)},
    	xlabel={\large Lag},
    	ylabel={\large Autocorrelation},
    	xmin=0, xmax=400,
    	ymax=1,
    	ymin=-0.1,
    	ytick={0,0.2,0.4,0.6,0.8,1},
    	grid=major,
    	legend style={draw=none,
        	          at={(1.5, 0.95)},
            	      anchor=north},
    	legend style={font=\large},
    	legend cell align=left]
    	\addplot[pcn,mark=none,style=very thick]  table [x index=0, y index=1, col sep=tab] {data/nonlinear1_acs_alg.txt};
    	\addplot[fes,mark=none,style=very thick]  table [x index=0, y index=2, col sep=tab] {data/nonlinear1_acs_alg.txt};
    	\addplot[hybrid,mark=none,style=very thick]  table [x index=0, y index=3, col sep=tab] {data/nonlinear1_acs_alg.txt};
    	\addplot[hybrid-proj,mark=none,style=very thick]  table [x index=0, y index=4, col sep=tab] {data/nonlinear1_acs_alg.txt};
    	\addplot[black!60!white,mark=none,style=thick,dashed] coordinates {(0,0) (400,0)};
    	\draw[fill=black!60!white,opacity=.2,draw=none] (axis cs:0,-0.1) -- (axis cs:400,-0.1) -- (axis cs:400,0.1) -- (axis cs:0,0.1) -- cycle;
	\end{axis}
	\end{tikzpicture}
	\begin{tikzpicture}[scale=0.8]
	\begin{axis}[
   		width=8cm,
    	height=5cm,
    	title={Nonlinear (ii)},
    	xlabel={\large Lag},
    	xmin=0, xmax=5000,
    	ymax=1,
    	ymin=-0.1,
    	ytick={0,0.2,0.4,0.6,0.8,1},
		ymajorticks=false,
    	grid=major,
    	legend style={draw=none,
        	          at={(1.3, 0.95)},
            	      anchor=north},
    	legend style={font=\large},
    	legend cell align=left]
    	\addplot[pcn,mark=none,style=very thick]  table [x index=0, y index=1, col sep=tab] {data/nonlinear2_acs_alg.txt};
    	\addlegendentry{pCN}
    	\addplot[fes,mark=none,style=very thick]  table [x index=0, y index=2, col sep=tab] {data/nonlinear2_acs_alg.txt};
    	\addlegendentry{FES}
    	\addplot[hybrid,mark=none,style=very thick]  table [x index=0, y index=3, col sep=tab] {data/nonlinear2_acs_alg.txt};
    	\addlegendentry{\hybrid}
    	\addplot[hybrid-proj,mark=none,style=very thick]  table [x index=0, y index=4, col sep=tab] {data/nonlinear2_acs_alg.txt};
    	\addlegendentry{\hybridp}
    	\addplot[black!60!white,mark=none,style=thick,dashed] coordinates {(0,0) (5000,0)};
    	\draw[fill=black!60!white,opacity=.2,draw=none] (axis cs:0,-0.1) -- (axis cs:5000,-0.1) -- (axis cs:5000,0.1) -- (axis cs:0,0.1) -- cycle;
	\end{axis}
	\end{tikzpicture}
	
    \caption{Autocorrelations of the quantity $\|u\|_{L^2}^2$ for the nonlinear problems (i), (ii) for the 4 different sampling algorithms, averaged over particles.}
    \label{fig:nonlinear_ac}
\end{figure}

For problem (ii) the corresponding autocorrelations and density estimates are shown in \cref{fig:nonlinear_ac} and the bottom row in \cref{fig:nonlinear_densities}. The autocorrelation behavior is similar to the linear case, with \hybrid and \hybridp performing similarly to each other and outperforming both pCN and FES. Again the pCN autocorrelation decays faster than FES initially, but FES is faster asymptotically.  The density estimates and scatter plots illustrate the poor mixing of pCN compared to the other algorithms. Note that even though the autocorrelation for FES decays only slightly faster than for pCN, the mixing appears much better than pCN and close to that for the hybrid algorithms. The MPSRFs in \cref{tab:psrf} mirror this, with all algorithms significantly outperforming pCN and the hybrid algorithms outperforming FES.

\begin{figure}[bt]\centering
    \begin{tikzpicture}
    \node at (-5,5) {\includegraphics[width=0.32\textwidth,trim=0cm 0.5cm 1.5cm 0cm,clip]{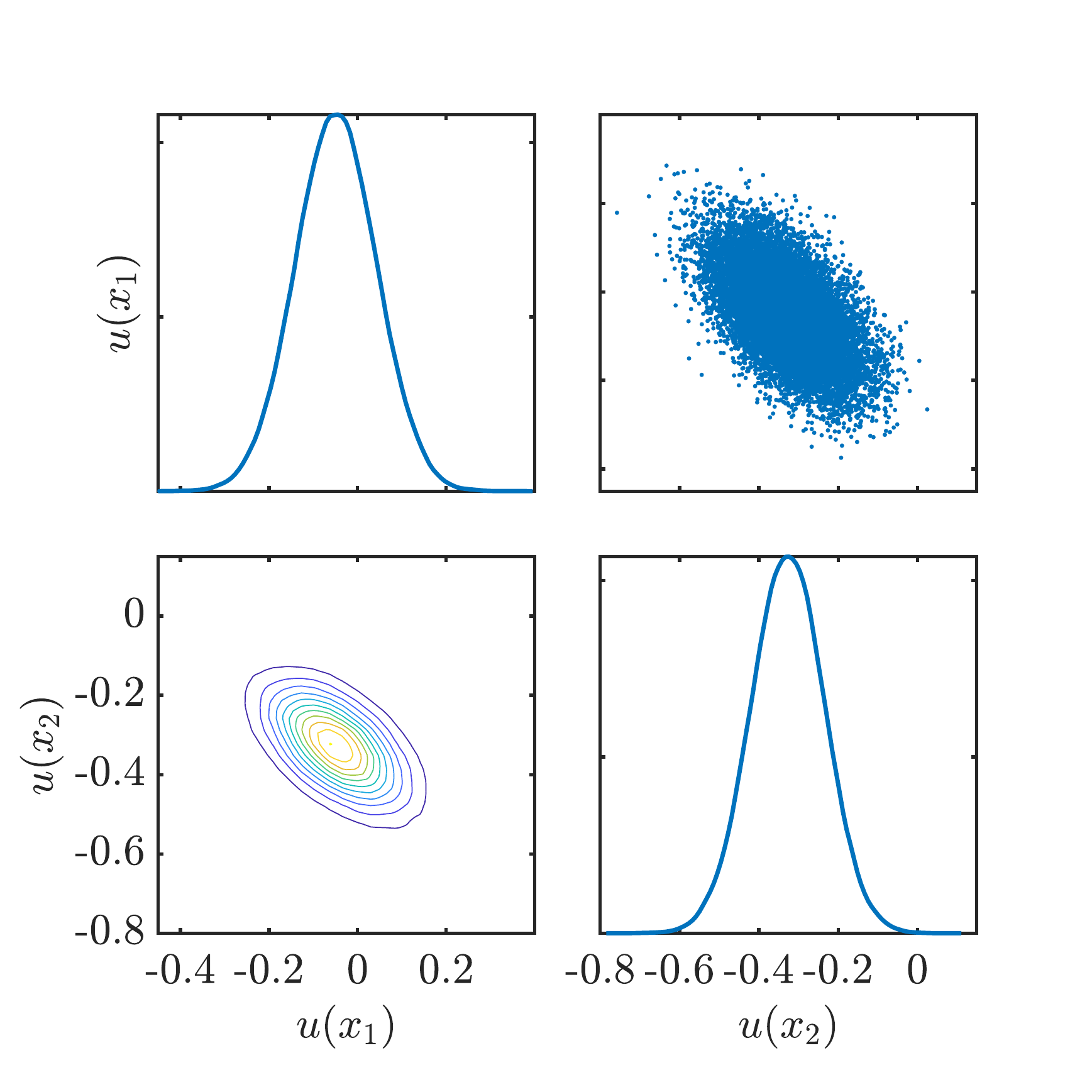}};
    \node at (0,5) {\includegraphics[width=0.32\textwidth,trim=0cm 0.5cm 1.5cm 0cm,clip]{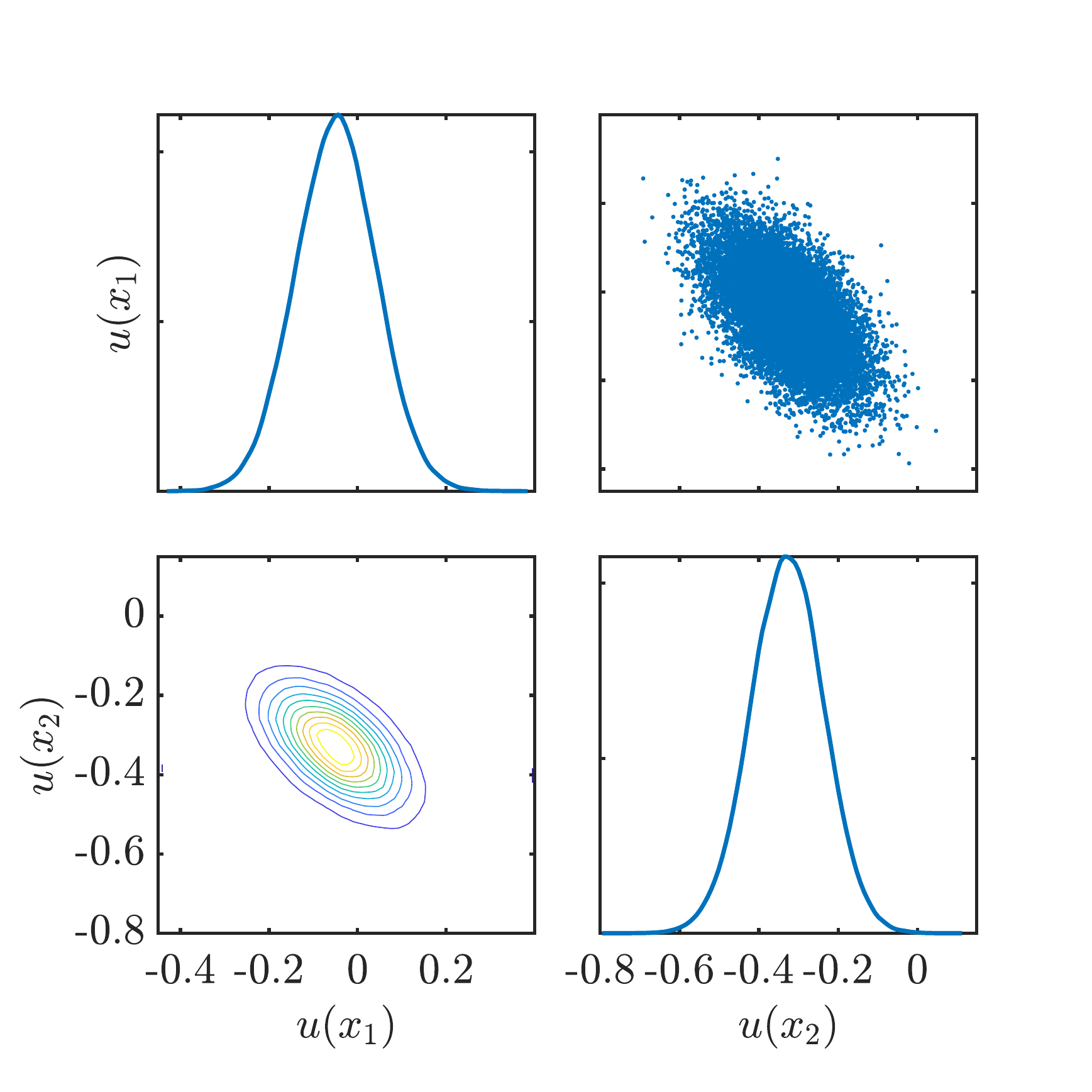}};
    \node at (5,5) {\includegraphics[width=0.32\textwidth,trim=0cm 0.5cm 1.5cm 0cm,clip]{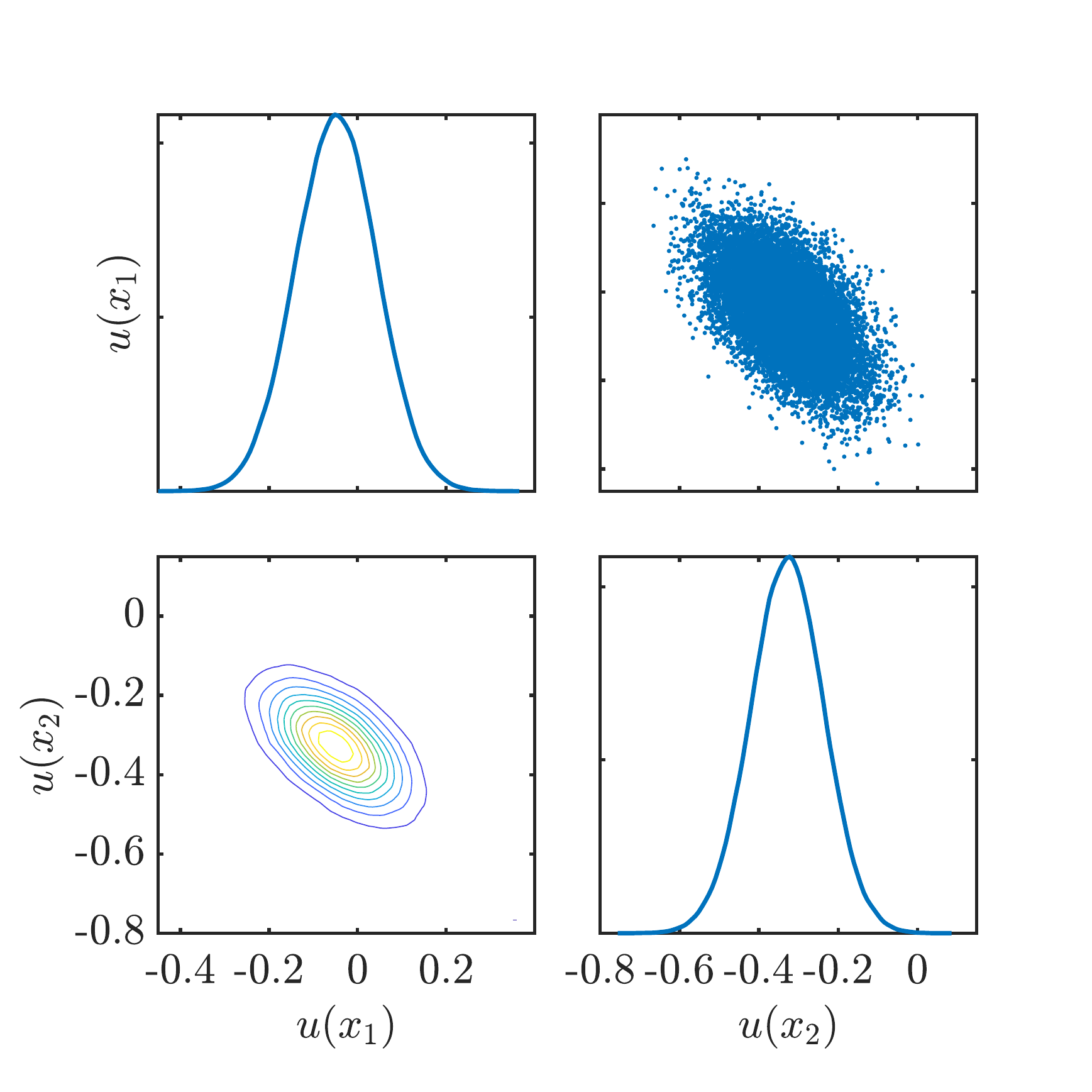}};
    \node at (-4.7,7.3) {\small pCN for (i)};
    \node at (0.3,7.3) {\small FES for (i)};
    \node at (5.3,7.3) {\small \hybrid for (i)};
    \end{tikzpicture}\\

    \begin{tikzpicture}
    \node at (-5,5) {\includegraphics[width=0.32\textwidth,trim=0cm 0.5cm 1.5cm 0cm,clip]{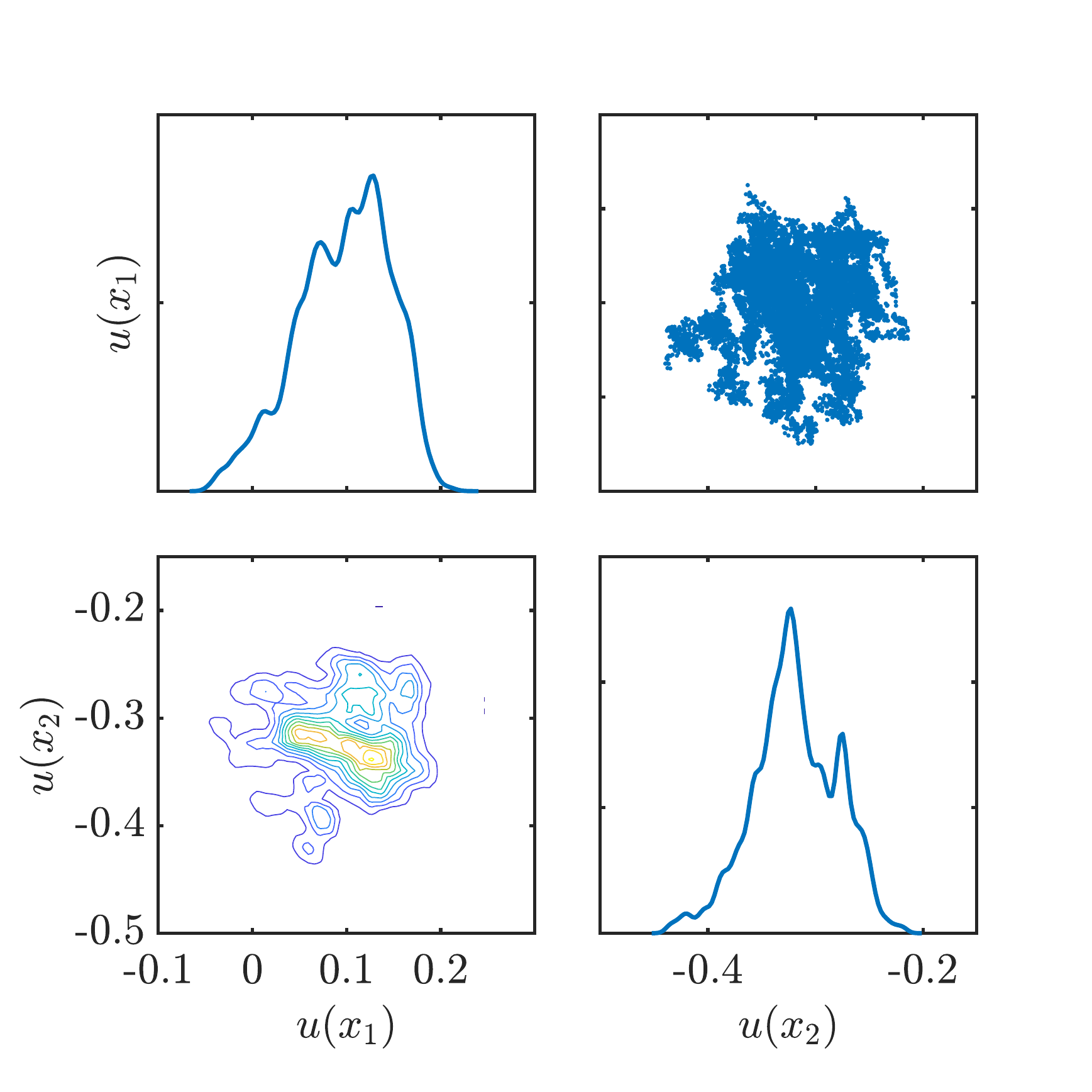}};
    \node at (0,5) {\includegraphics[width=0.32\textwidth,trim=0cm 0.5cm 1.5cm 0cm,clip]{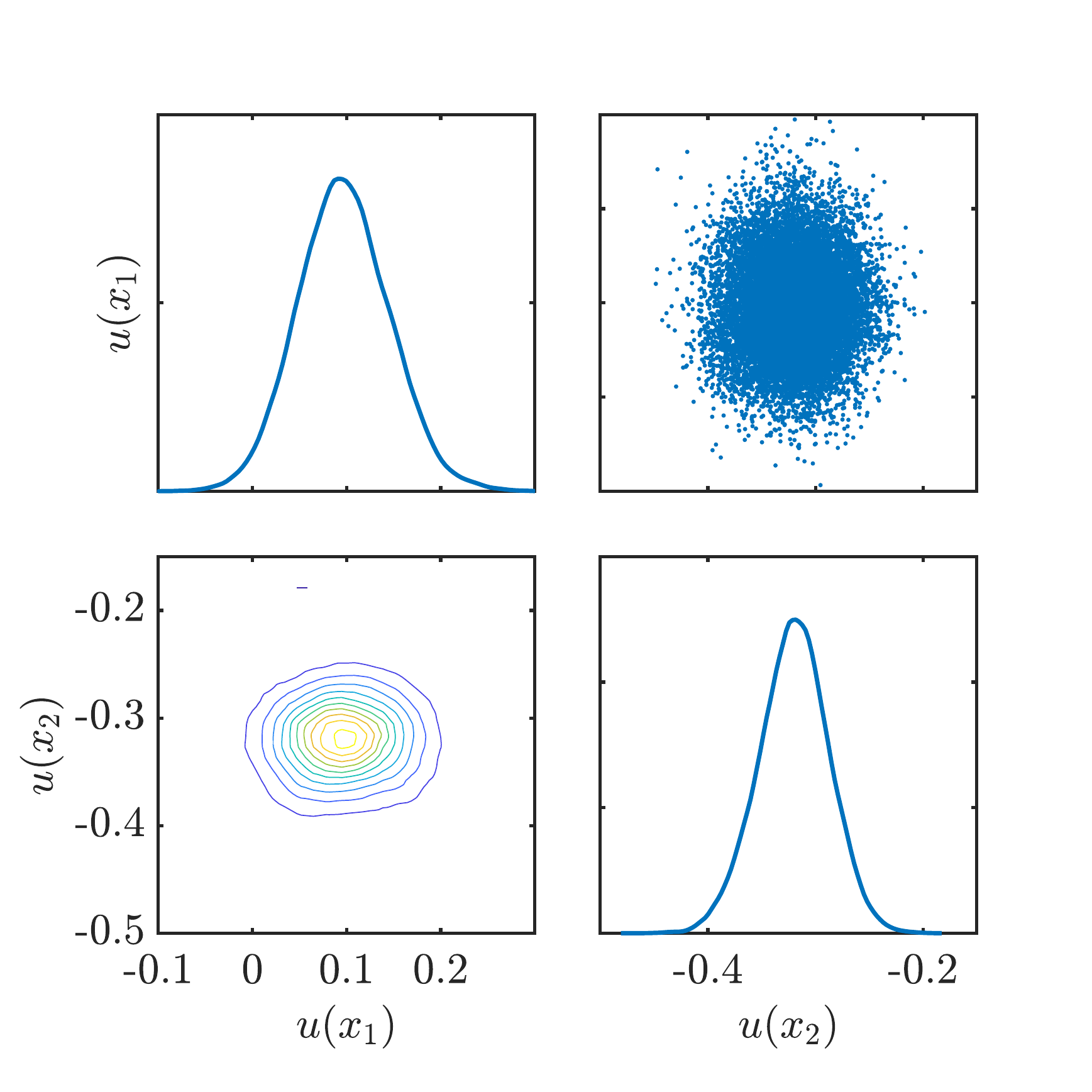}};
    \node at (5,5) {\includegraphics[width=0.32\textwidth,trim=0cm 0.5cm 1.5cm 0cm,clip]{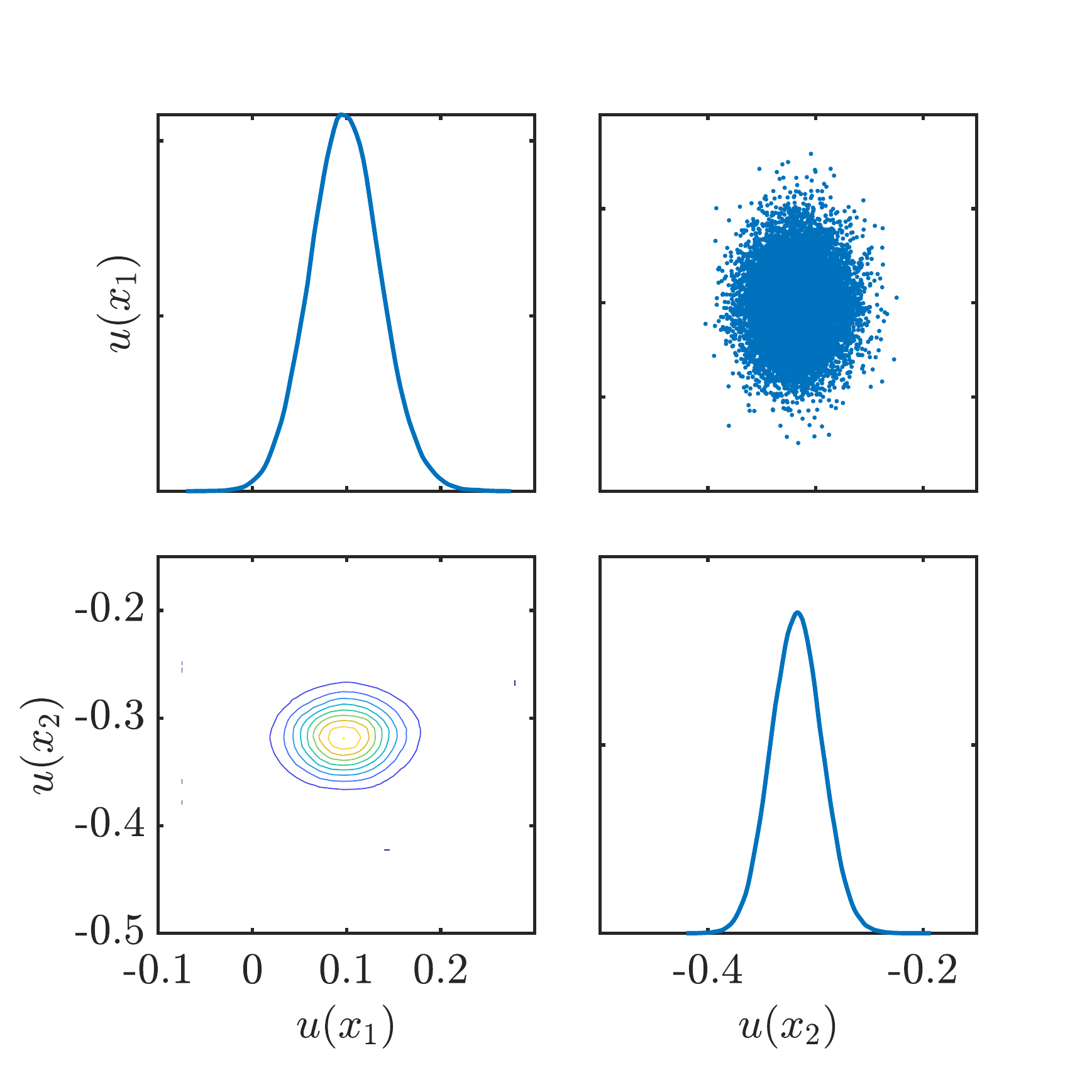}};
    \node at (-4.7,7.3) {\small pCN for (ii)};
    \node at (0.3,7.3) {\small FES for (ii)};
    \node at (5.3,7.3) {\small \hybrid for (ii)};
    \end{tikzpicture}%
    \caption{Density estimates and scatter plots for marginals corresponding to 2 different point evaluations for the nonlinear problem, for 3 different sampling algorithms. The upper row corresponds to setup (i), and the lower row to setup (ii).}
    \label{fig:nonlinear_densities}
\end{figure}

\subsection{A non-differentiable problem: level set prior} We finally consider an example where derivatives do not exist and so gradient-based methods are unavailable. Specifically, we consider a linear inverse problem with a level set prior \cite{iglesias2016bayesian,dunlop2017hierarchical}, with the intention of recovering a piecewise constant field. Whilst the forward map is linear, the level set mapping included in the likelihood ensures that the posterior distribution is non-Gaussian. Specifically, let $\Omega = (0,1)^2$ and define the map $\mathcal{S}:L^2(\Omega)\to C^0(\Omega)$, $u\mapsto p$,  
\[
-\Delta p(x) = \mathrm{sgn}(u(x)),\quad x \in \Omega
\]
subject to homogeneous Dirichlet boundary conditions. Define the observation operator $\mathcal{O}:C^0(\Omega)\to \R^J$ as point evaluations on a uniform grid of $J=9$ points, and the nonlinear forward map $\mathcal{G} = \mathcal{O}\circ \mathcal{S}$. The data $y \in \R^J$ is assumed to arise from the model
\[
y = \mathcal{G}(u) + \eta,\quad \eta \sim N(0,\gamma^2 I)
\]
with $\gamma = 10^{-3}$. A continuous Gaussian prior $\mu_0 = N(0,C_0)$ is placed on $u$, $C_0 = (I-\Delta)^2$, with the intention of recovering the binary field $\mathrm{sgn}(u)$. The true binary field is the indicator function of a circle, with the domain discretized on a uniform mesh of $D = 32^2$ points. We fix $N = 80$ particles; for the FES method we choose the number of modes $M = 10$ and for the \hybridp method we choose $M = 20$. For both \hybrid and \hybridp methods we fix $\lambda = 0.2$ as previously. \Cref{fig:levelset_ac} shows the resulting autocorrelations for the quantity $\|u\|_{L^2}^2$, \cref{fig:levelset_densities} shows kernel density estimates for marginals corresponding to two point observations of the field $u$, at points $x_1 = (0.03,0.03)$, $x_2 = (0.5,0.56)$, and \cref{tab:psrf} shows the resulting MSPRFs. The same trends as for the nonlinear problem (ii) are observed, though in this case the FES mixes significantly more slowly.

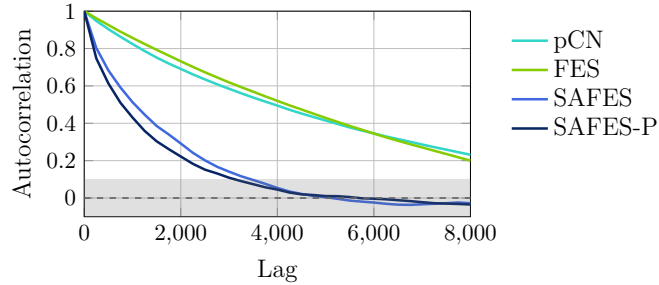
\begin{figure}
    \centering
    \begin{tikzpicture}[scale=0.8]
	\begin{axis}[
   		width=8cm,
    	height=5cm,
    	title={},
    	xlabel={\large Lag},
    	ylabel={\large Autocorrelation},
    	xmin=0, xmax=8000,
    	ymax=1,
    	ymin=-0.1,
    	ytick={0,0.2,0.4,0.6,0.8,1},
    	grid=major,
    	legend style={draw=none,
        	          at={(1.3, 0.95)},
            	      anchor=north},
    	legend style={font=\large},
    	legend cell align=left]
    	\addplot[pcn,mark=none,style=very thick]  table [x index=0, y index=1, col sep=tab] {data/level_acs_alg.txt};
    	\addlegendentry{pCN}
    	\addplot[fes,mark=none,style=very thick]  table [x index=0, y index=2, col sep=tab] {data/level_acs_alg.txt};
    	\addlegendentry{FES}
    	\addplot[hybrid,mark=none,style=very thick]  table [x index=0, y index=3, col sep=tab] {data/level_acs_alg.txt};
    	\addlegendentry{\hybrid}
    	\addplot[hybrid-proj,mark=none,style=very thick]  table [x index=0, y index=4, col sep=tab] {data/level_acs_alg.txt};
    	\addlegendentry{\hybridp}
    	\addplot[black!60!white,mark=none,style=thick,dashed] coordinates {(0,0) (8000,0)};
    	\draw[fill=black!60!white,opacity=.2,draw=none] (axis cs:0,-0.1) -- (axis cs:8000,-0.1) -- (axis cs:8000,0.1) -- (axis cs:0,0.1) -- cycle;
	\end{axis}
	\end{tikzpicture}
    \caption{Autocorrelations for the level set problem for 3 different point evaluations, for the 4 different sampling algorithms, averaged over particles.}
    \label{fig:levelset_ac}
\end{figure}

\begin{figure}\centering
    \begin{tikzpicture}
    \node at (-5,5) {\includegraphics[width=0.32\textwidth,trim=0cm 0.5cm 1.5cm 0cm,clip]{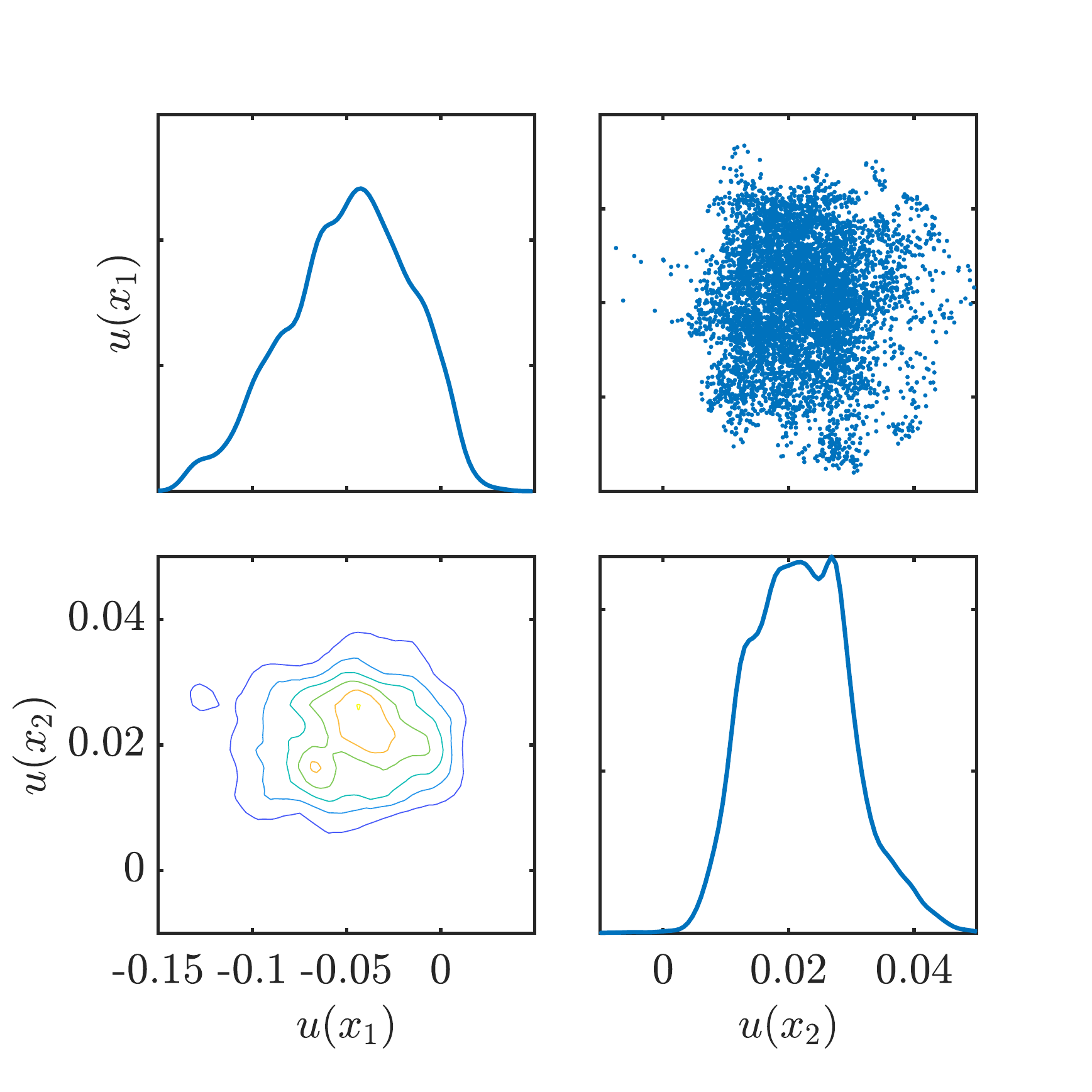}};
    \node at (0,5) {\includegraphics[width=0.32\textwidth,trim=0cm 0.5cm 1.5cm 0cm,clip]{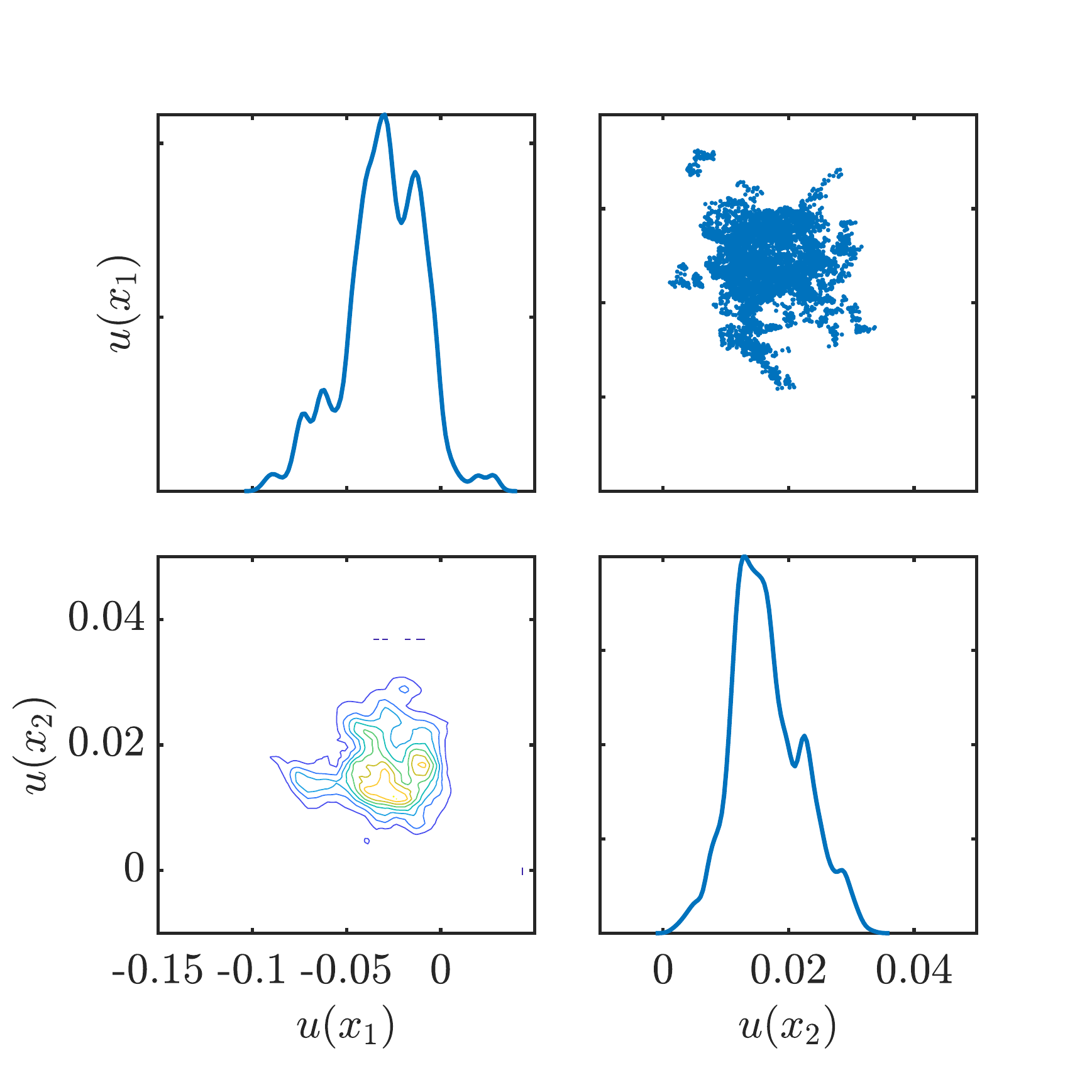}};
    \node at (5,5) {\includegraphics[width=0.32\textwidth,trim=0cm 0.5cm 1.5cm 0cm,clip]{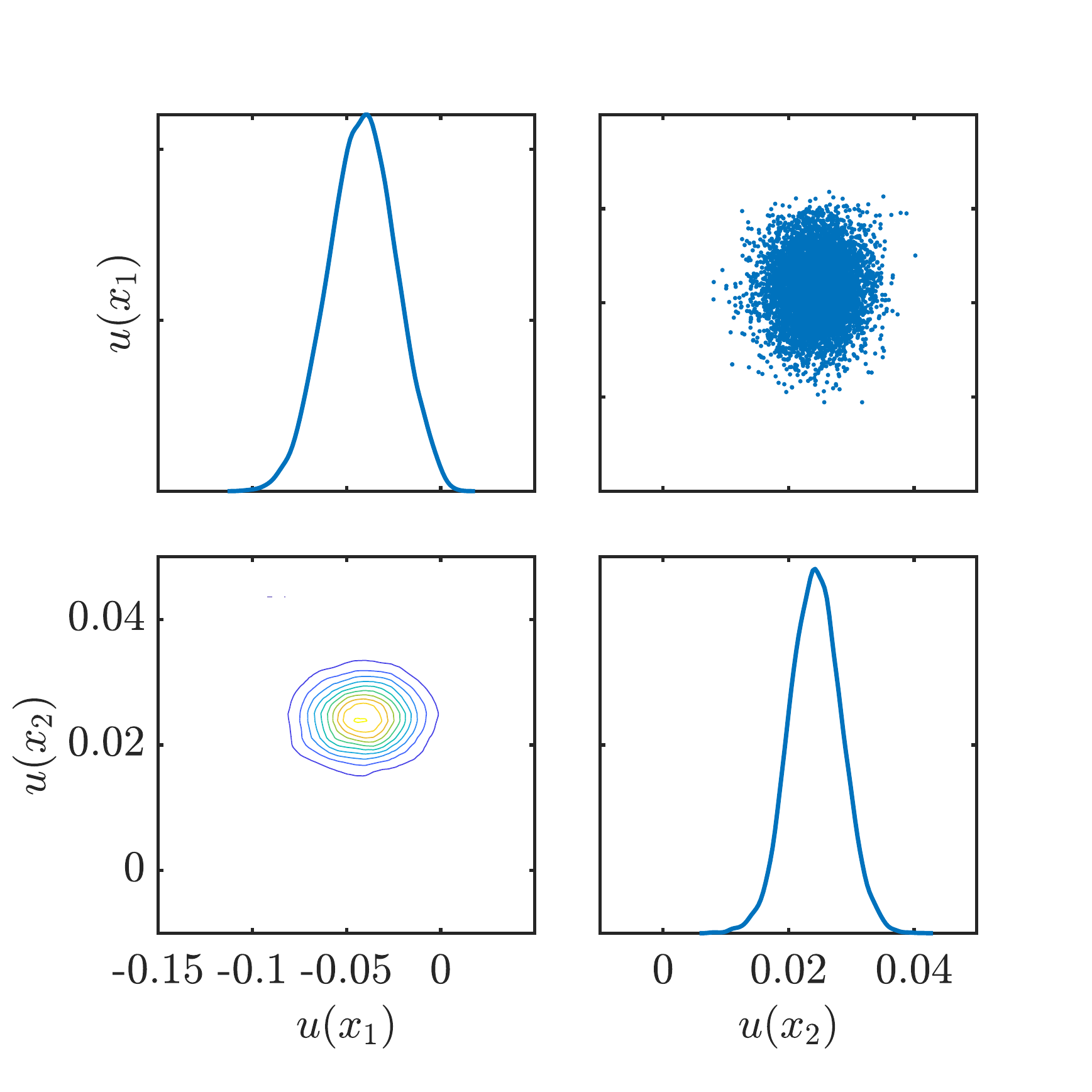}};
    \node at (-4.7,7.3) {\small pCN};
    \node at (0.3,7.3) {\small FES};
    \node at (5.3,7.3) {\small \hybrid};
    \end{tikzpicture}%
    \caption{Density estimates and scatter plots for marginals corresponding to 2 different point evaluations for the level set problem, for 3 different sampling algorithms. }
    \label{fig:levelset_densities}
\end{figure}

\section{Conclusions}
By combining affine-invariant with dimension-robust sampling methods, one can find a compromise between the advantages and disadvantages of both. Specifically, in the context of Bayesian inverse problems, when the data is particularly informative and the unknown state is high-dimensional, one can obtain a viable method of sampling the posterior distribution without the need for derivatives of the likelihood.

\appendix
\section{Proofs}\label{appendix}

\begin{proposition}
Define $I_C:X\to\R$ by \cref{eq:IC}. The each term in this expression is finite almost-surely under the posterior.
\end{proposition}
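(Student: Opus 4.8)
The plan is to push everything onto the prior and then treat the three summands of \cref{eq:IC} separately. Since $\mu \ll \mu_0$ with density $Z^{-1}\exp(-\Phi)$ and $Z \in (0,\infty)$, any event of full $\mu_0$-measure also has full $\mu$-measure, so it suffices to prove finiteness $\mu_0$-almost surely, where $\mu_0 = N(0,C_0)$. I will use throughout the standing assumption that $N(0,C_0)$ and $N(0,C)$ are equivalent: by the Feldman--Hajek theorem this forces equal Cameron--Martin spaces $\mathcal{H}_{C_0} = \mathcal{H}_C = C^{1/2}(X)$, implies $m \in \mathcal{H}_C$, and makes $C_0^{1/2}C^{-1/2}$ and $C^{-1/2}C_0^{1/2}$ bounded. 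With this, the constant term $\tfrac12\|m\|_C^2 = \tfrac12\|C^{-1/2}m\|^2$ is finite at once, because $m \in \mathcal{H}_C$ gives $C^{-1/2}m \in X$.

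For the linear term I would realize $u \mapsto \langle u,m\rangle_C$ as a measurable (Paley--Wiener) linear functional on $(X,\mu_0)$ and show it has finite variance, which makes it an almost-surely finite Gaussian random variable. Written formally as $\langle C^{-1}m,u\rangle$, a short computation in the eigenbasis of $C_0$ gives its $\mu_0$-variance as $\|C_0^{1/2}C^{-1}m\|^2$. I then factor $C_0^{1/2}C^{-1}m = (C_0^{1/2}C^{-1/2})(C^{-1/2}m)$: the second factor lies in $X$ since $m \in \mathcal{H}_C$, and the first is a bounded operator by equivalence, so the variance is finite and the term is finite $\mu_0$-a.s.

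The quadratic term is the crux. Its well-definedness rests on the Hilbert--Schmidt property of $K := I - C_0^{1/2}C^{-1}C_0^{1/2}$: equivalence gives $C_0^{-1/2}CC_0^{-1/2} = I+H$ with $H$ Hilbert--Schmidt, and from $C_0^{1/2}C^{-1}C_0^{1/2} = (I+H)^{-1}$ we get $K = (I+H)^{-1}H$, a bounded operator times a Hilbert--Schmidt one, hence Hilbert--Schmidt. In white-noise coordinates $w = C_0^{-1/2}u$ the term becomes a second-order Gaussian chaos $\tfrac12\langle w,(C_0^{-1/2}KC_0^{1/2})w\rangle$, which I would split into its diagonal and off-diagonal parts.

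The main obstacle is that the Hilbert--Schmidt property of $K$ by itself does \emph{not} render this chaos finite: the diagonal part carries the trace of $K$, which need not converge for a general equivalent $C$ (one can in fact build equivalent centred Gaussians for which this quadratic form equals $-\infty$ almost surely). What rescues the statement is that the jump covariance is a \emph{finite-rank} update of the prior covariance, $C = C_0 + \gamma^2 V_SV_S^\top$, which is exactly the setting of \cref{prop:hybrid_equiv}. Since the columns of $V_S$ lie in $\mathcal{H}_{C_0}$, the operator $\tilde W := \gamma^2 C_0^{-1/2}V_SV_S^\top C_0^{-1/2}$ is finite rank; writing $C = C_0^{1/2}(I+\tilde W)C_0^{1/2}$ yields $C_0^{1/2}C^{-1}C_0^{1/2} = (I+\tilde W)^{-1}$, so that $K = (I+\tilde W)^{-1}\tilde W$ is itself finite rank, hence trace class. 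The chaos then collapses to a finite sum of products of measurable linear functionals of $u$, each almost surely finite by the argument used for the linear term, so the quadratic term is finite $\mu_0$-a.s.; equivalently, one may bypass the chaos and simply read off finiteness from the explicit finite-dimensional expression for $I_C$ available when $C_* = C_0$. Combining the three parts gives the claim.
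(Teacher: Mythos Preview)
Your treatment of the constant and linear terms is sound and close to the paper's (which packages both together as the Cameron--Martin log-density of $N(m,C)$ against $N(0,C)$ rather than computing a Paley--Wiener variance). For the quadratic term you depart from the paper: where it bounds the second moment of $\langle u,Ku\rangle_{C_0}$ directly from the Hilbert--Schmidt property of $K=I-C_0^{1/2}C^{-1}C_0^{1/2}$, you argue that Hilbert--Schmidt alone leaves the trace uncontrolled and instead invoke the finite-rank structure of the specific jump covariance $C=C_0+\gamma^2 V_SV_S^\top$. Your concern about the trace is well-founded---an Isserlis expansion of $\mathbb{E}\langle u,Ku\rangle_{C_0}^2$ produces a $(\mathrm{tr}\,K)^2$ term that $\|K\|_{HS}$ does not bound---so narrowing to the algorithmic setting is a sensible instinct.

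The gap is in how you execute that narrowing. You assert that the columns of $V_S$ lie in the Cameron--Martin space $\mathcal{H}_{C_0}$, which is what makes $\tilde W=\gamma^2 C_0^{-1/2}V_SV_S^\top C_0^{-1/2}$ a bounded finite-rank operator and legitimises the factorisation $C=C_0^{1/2}(I+\tilde W)C_0^{1/2}$. But the columns of $V_S$ are centred ensemble members, i.e.\ differences of samples from $\mu\ll\mu_0=N(0,C_0)$, and draws from a Gaussian measure almost surely do \emph{not} lie in its own Cameron--Martin space. In the genuinely infinite-dimensional setting this step therefore fails: $C_0^{-1/2}$ applied to a column of $V_S$ is not in $X$, $\tilde W$ is unbounded, and the conclusion that $K$ is finite rank (hence trace class) collapses. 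To rescue the argument you would need an additional hypothesis---for instance that the ensemble has been projected into $\mathcal{H}_{C_0}$, in the spirit of the paper's own remark about taking $m=P\overline{u}_S$---which neither you nor the paper supplies.
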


\begin{proof}
As the posterior is absolutely continuous with respect to the prior, it suffices to show that the terms are finite almost-surely under any measure equivalent to the prior. The finiteness of the final two terms follows from the Cameron-Martin theorem applied to the measures $N(m,C)$ and $N(0,C)$, as this is simply the logarithm of the Radon-Nikodym derivative between them. For the first term, note that we have for any Hilbert-Schmidt operator $Z:X\to X$ and $u \sim N(0,C_0)$,
\begin{align*}
\mathbb{E}\langle u,Zu\rangle_{C_0}^2 &= \mathbb{E}\langle \xi,Z\xi\rangle_X^2,\quad \xi \sim N(0,I)\\
&= \mathbb{E}\sum_{i,j,k,l} \xi_i\xi_j\xi_k\xi_l \langle \varphi_i,Z\varphi_j\rangle_X \langle \varphi_k,Z\varphi_l\rangle_X,\quad \xi_j \iid N(0,1)\\
&= 3 \sum_j \langle \varphi_j,Z\varphi_j\rangle_X^2\\
&= 3\|Z\|_{HS}^2,
\end{align*}
where $\{\varphi_j\}$ is any orthonormal basis for $X$. The operator $I-C_0^{\frac{1}{2}}C^{-1}C_0^{\frac{1}{2}}$ is Hilbert-Schmidt by the assumed equivalence of $N(0,C_0)$ and $N(0,C)$ and the Feldman-Hajek theorem, so the result follows.
\end{proof}


\begin{thebibliography}{10}

\bibitem{agapiou2017importance}
{\sc S.~Agapiou, O.~Papaspiliopoulos, D.~Sanz-Alonso, and A.~Stuart}, {\em
  Importance sampling: Intrinsic dimension and computational cost}, Statistical
  Science,  (2017), pp.~405--431.

\bibitem{BeskosGirolamiLanEtAl17}
{\sc A.~Beskos, M.~Girolami, S.~Lan, P.~E. Farrell, and A.~M. Stuart}, {\em
  Geometric {MCMC} for infinite-dimensional inverse problems}, Journal of
  Computational Physics, 335 (2017), pp.~327--351.

\bibitem{brooks1998general}
{\sc S.~P. Brooks and A.~Gelman}, {\em General methods for monitoring
  convergence of iterative simulations}, Journal of Computational and Graphical
  Statistics, 7 (1998), pp.~434--455.

\bibitem{bui2013computational}
{\sc T.~Bui-Thanh, O.~Ghattas, J.~Martin, and G.~Stadler}, {\em A computational
  framework for infinite-dimensional {B}ayesian inverse problems {P}art {I}:
  The linearized case, with application to global seismic inversion}, SIAM
  Journal on Scientific Computing, 35 (2013), pp.~A2494--A2523.

\bibitem{cotter2013mcmc}
{\sc S.~L. Cotter, G.~O. Roberts, A.~M. Stuart, and D.~White}, {\em {MCMC}
  methods for functions: modifying old algorithms to make them faster},
  Statistical Science, 28 (2013), pp.~424--446.

\bibitem{coullon2021ensemble}
{\sc J.~Coullon and R.~J. Webber}, {\em Ensemble sampler for
  infinite-dimensional inverse problems}, Statistics and Computing, 31 (2021),
  pp.~1--9.

\bibitem{CuiLawMarzouk16}
{\sc T.~Cui, K.~J. Law, and Y.~M. Marzouk}, {\em Dimension-independent
  likelihood-informed {MCMC}}, Journal of Computational Physics, 304 (2016),
  pp.~109--137.

\bibitem{da2002second}
{\sc G.~Da~Prato and J.~Zabczyk}, {\em Second Order Partial Differential
  Equations in Hilbert spaces}, vol.~293, Cambridge University Press, 2002.

\bibitem{DashtiStuart16}
{\sc M.~Dashti and A.~M. Stuart}, {\em The {Bayesian} approach to inverse
  problems}, Handbook of Uncertainty Quantification,  (2016), pp.~1--118.

\bibitem{dunlop2017hierarchical}
{\sc M.~M. Dunlop, M.~A. Iglesias, and A.~M. Stuart}, {\em Hierarchical
  {B}ayesian level set inversion}, Statistics and Computing, 27 (2017),
  pp.~1555--1584.

\bibitem{evensen2003ensemble}
{\sc G.~Evensen}, {\em The ensemble {K}alman filter: Theoretical formulation
  and practical implementation}, Ocean dynamics, 53 (2003), pp.~343--367.

\bibitem{foreman2013emcee}
{\sc D.~Foreman-Mackey, D.~W. Hogg, D.~Lang, and J.~Goodman}, {\em emcee: the
  {MCMC} hammer}, Publications of the Astronomical Society of the Pacific, 125
  (2013), p.~306.

\bibitem{GabrieRotskoffVandeneijnden21}
{\sc M.~Gabri{\'e}, G.~M. Rotskoff, and E.~Vanden-Eijnden}, {\em Efficient
  {B}ayesian sampling using normalizing flows to assist {M}arkov chain {M}onte
  {C}arlo methods}, in ICML Workshop on Invertible Neural Networks, Normalizing
  Flows, and Explicit Likelihood Models, 2021.

\bibitem{garbuno2020interacting}
{\sc A.~Garbuno-Inigo, F.~Hoffmann, W.~Li, and A.~M. Stuart}, {\em Interacting
  {L}angevin diffusions: Gradient structure and ensemble {K}alman sampler},
  SIAM Journal on Applied Dynamical Systems, 19 (2020), pp.~412--441.

\bibitem{Garbuno-InigoNuskenReich20}
{\sc A.~Garbuno-Inigo, N.~N\"usken, and S.~Reich}, {\em Affine invariant
  interacting {L}angevin dynamics for {B}ayesian inference}, SIAM Journal on
  Applied Dynamical Systems, 19 (2020), pp.~1633--1658.

\bibitem{GoodmanWeare10}
{\sc J.~Goodman and J.~Weare}, {\em Ensemble samplers with affine invariance},
  Communications in Applied Mathematics and Computational Science, 5 (2010),
  pp.~65--80.

\bibitem{hairer2014spectral}
{\sc M.~Hairer, A.~M. Stuart, and S.~J. Vollmer}, {\em {Spectral gaps for a
  Metropolis--Hastings algorithm in infinite dimensions}}, The Annals of
  Applied Probability, 24 (2014), pp.~2455--2490.

\bibitem{huijser2015properties}
{\sc D.~Huijser, J.~Goodman, and B.~J. Brewer}, {\em Properties of the
  affine-invariant ensemble sampler's ‘stretch move’ in high dimensions},
  Australian \& New Zealand Journal of Statistics,  (2022).

\bibitem{iglesias2016bayesian}
{\sc M.~A. Iglesias, Y.~Lu, and A.~M. Stuart}, {\em A {B}ayesian level set
  method for geometric inverse problems}, Interfaces and free boundaries, 18
  (2016), pp.~181--217.

\bibitem{KimVillaParnoEtAl21}
{\sc K.-T. Kim, U.~Villa, M.~Parno, Y.~Marzouk, O.~Ghattas, and N.~Petra}, {\em
  {hIPPYlib-MUQ: A Bayesian Inference Software Framework for Integration of
  Data with Complex Predictive Models under Uncertainty}}, arXiv preprint
  arXiv:2112.00713,  (2021).

\bibitem{LindseyWeareZhang22}
{\sc M.~Lindsey, J.~Weare, and A.~Zhang}, {\em Ensemble {M}arkov chain {M}onte
  {C}arlo with teleporting walkers}, SIAM/ASA Journal on Uncertainty
  Quantification (to appear),  (2022).

\bibitem{pavliotis2022derivative}
{\sc G.~Pavliotis, A.~Stuart, and U.~Vaes}, {\em Derivative-free {B}ayesian
  inversion using multiscale dynamics}, SIAM Journal on Applied Dynamical
  Systems, 21 (2022), pp.~284--326.

\bibitem{pinski2015algorithms}
{\sc F.~J. Pinski, G.~Simpson, A.~M. Stuart, and H.~Weber}, {\em Algorithms for
  {K}ullback--{L}eibler approximation of probability measures in infinite
  dimensions}, SIAM Journal on Scientific Computing, 37 (2015),
  pp.~A2733--A2757.

\bibitem{rudolf2018generalization}
{\sc D.~Rudolf and B.~Sprungk}, {\em On a generalization of the preconditioned
  {C}rank--{N}icolson {M}etropolis algorithm}, Foundations of Computational
  Mathematics, 18 (2018), pp.~309--343.

\bibitem{tierney1994markov}
{\sc L.~Tierney}, {\em Markov chains for exploring posterior distributions},
  the Annals of Statistics,  (1994), pp.~1701--1728.

\bibitem{vollmer2015dimension}
{\sc S.~J. Vollmer}, {\em Dimension-independent {MCMC} sampling for inverse
  problems with non-{G}aussian priors}, SIAM/ASA Journal on Uncertainty
  Quantification, 3 (2015), pp.~535--561.

\end{thebibliography}

\end{document}